\title             {Improved bicriteria approximation for {\em k}-edge-connectivity}
\titlerunning{Improved bicriteria approximation for {\em k}-edge-connectivity}
\author{Zeev Nutov}{The Open University of Israel}{nutov@openu.ac.il}{https://orcid.org/0000-0002-6629-3243}{}
\authorrunning{Zeev Nutov}
\begin{document}

\maketitle

\newcommand {\ignore} [1] {}


\def\CC  {{\cal C}}
\def\LL  {{\cal L}}
\def\RR  {{\cal R}}
\def\TT  {{\cal T}}

\def\de  {\delta}
\def\th   {\theta}
\def\be  {\beta}
\def\th   {\theta}
\def\De  {\Delta}

\def\empt {\emptyset}
\def\sem  {\setminus}
\def\subs {\subseteq}

\def\f  {\frac}
\def\opt {\sf opt}
\def\span {\sf span}

\def\kECSS  {{\sc $k$-ECSS}}  
\def\kECSM {{\sc $k$-ECSM}}  
\def\kCSS {{\sc $k$-VCSS}}        

\keywords{$k$-edge-connected subgraph, bicriteria approximation, iterative LP-rounding}

\begin{abstract}
In the {\sc  $k$-Edge Connected Spanning Subgraph} ({\kECSS}) problem 
we are given a (multi-)graph $G=(V,E)$ with edge costs and an integer $k$, 
and seek a min-cost $k$-edge-connected spanning subgraph of $G$. 
The problem admits a $2$-approximation algorithm and no better approximation ratio is known.
Hershkowitz, Klein, and Zenklusen [STOC 24] gave a bicriteria $(1,k-10)$-approximation algorithm 
that computes a $(k-10)$-edge-connected spanning subgraph 
of cost at most the optimal value of a standard Cut-LP for {\kECSS}. 
This LP bicriteria approximation was recently improved by Nutov and Cohen \cite{CN} to $(1,k-4)$,
where also was given a bicriteria approximation $(3/2,k-2)$.
In this paper we improve the bicriteria approximation to 
$(1,k-2)$ for $k$ even and to
$\left(1-\f{1}{k},k-3\right)$ for $k$ is odd,
and also give another bicriteria approximation $(3/2,k-1)$.
After this paper was written, we became aware that the same result was achieved earlier by Kumar and Swamy \cite{KS}.

The {\sc $k$-Edge-Connected Spanning Multi-subgraph} ({\kECSM}) problem 
is almost the same as {\kECSS},
except that any edge can be selected multiple times at the same cost.
The previous best approximation ratio for {\kECSM} was $1+4/k$ \cite{CN}.
Our result improves this to  
$1+\f{2}{k}$ for $k$ even and to
$1+\f{3}{k}$ for $k$ odd,
where for $k$ odd the computed subgraph is in fact $(k+1)$-edge-connected.
\end{abstract}

\section{Introduction} \label{s:intro}

A graph is {\bf $k$-edge-connected} if it contains $k$ edge-disjoint paths between any two nodes.
We consider the following problem.

\begin{center} \fbox{\begin{minipage}{0.98\textwidth}
\underline{\sc  $k$-Edge-Connected Spanning Subgraph} ({\kECSS}) \\
{\em Input:} \ \ A (multi-)graph $G=(V,E)$ with edge costs and an integer $k$. \\
{\em Output:} A  min-cost $k$-edge-connected spanning subgraph of $G$. 
\end{minipage}} \end{center}

The problems admits approximation ratio $2$ by a reduction to a bidirected graph (c.f. \cite{K}), 
and no better approximation ratio is known.
Hershkowitz, Klein, and Zenklusen \cite{HKZ} 
gave a bicriteria $(1,k-10)$-approximation algorithm 
that computes a $(k-10)$-edge-connected spanning subgraph of cost
at most the optimal value of a standard Cut-LP for {\kECSS}:
\begin{equation} \label{LPk}
\displaystyle
\begin{array} {lllllll} 
& \min              & \displaystyle c^T \cdot x & \\
& \ \mbox{s.t.} & x(\de_E(S)) \ge k                & \forall \empt \ne S \subset V \\
&                      & 0 \le x_e \le 1                  & \forall e \in E                                                                 
\end{array}
\end{equation}
where $\de_E(S)$ is the set of edges in $E$ with exactly one end in $S$ 
and $x(\de_E(S))=\sum_{e \in \de_E(S)}x_e$ is the sum of their $x$-values.
This was improved to a $(1,k-4)$ by Nutov and Cohen \cite{CN},
where also was given a $(3/2,k-2)$ bicriteria approximation.  
We improve this as follows. 

\begin{theorem} \label{t:main}
{\kECSS} admits the following bicriteria approximation ratios w.r.t. the Cut-LP: 
$(1,k-2)$ for $k$ is even and $\left(1-\f{1}{k},k-3\right)$ for $k$ odd.
The problem also admits a bicriteria approximation ratio $(3/2,k-1)$.
\end{theorem}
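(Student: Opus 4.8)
The plan is to solve the Cut-LP \eqref{LPk}, take an optimal basic solution $x$ of cost $\tau$, and round it to an integral spanning subgraph by iterative LP-rounding. I expect all three bounds to come from a single rounding engine run at two ``operating points'' (lose $2$ units of connectivity at cost factor $1$, or lose $1$ unit at cost factor $3/2$), together with a reduction from odd $k$ to even $k$. For that reduction, suppose the bound $(1,k-2)$ is already established for even $k$. Given an odd $k$, the point $y=\f{k-1}{k}\,x$ satisfies $y\le 1$ and $y(\de_E(S))\ge k-1$ for every nonempty proper $S$, so it is feasible for the Cut-LP with requirement $k-1$, and $c^Ty=\f{k-1}{k}\tau$. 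Since $k-1$ is even, applying the even-case algorithm with requirement $k-1$ returns an integral subgraph that is $((k-1)-2)=(k-3)$-edge-connected and has cost at most the Cut-LP optimum for requirement $k-1$, hence at most $c^Ty=\left(1-\f1k\right)\tau$; this is exactly the claimed $\left(1-\f1k,\,k-3\right)$ bound. So it remains to handle even $k$ at cost factor $1$, and all $k$ at cost factor $3/2$.

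\textbf{The rounding engine (even $k$).} First I would buy outright every edge with $x_e=1$; let $B$ be this set and $G_B=(V,B)$. The remaining fractional edges $E_f=\{e:0<x_e<1\}$ must cover the residual requirement $r(S)=\max\{0,\,k-d_B(S)\}$, which is a symmetric uncrossable (skew-supermodular) function, and $x$ restricted to $E_f$ is feasible for $\{z:z(\de_E(S))\ge r(S),\ 0\le z\le 1\ \text{on}\ E_f\}$. On this residual polytope I would run iterative LP-rounding: repeatedly take a basic solution, use the standard uncrossing argument to obtain a laminar family $\LL$ of tight sets that spans the solution with $|\LL|$ equal to the number of strictly-fractional coordinates, and then make one of two moves --- permanently buy an edge whose LP value is large, charging its cost against the resulting drop of the residual LP optimum; or relax a residual constraint $S$ (or zero out a near-zero coordinate) whose current requirement is already within the allowed connectivity slack, here $2$. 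The output is $B$ together with all fractional edges that get rounded up. By the slack bound no cut of this subgraph is short of $k$ by more than $2$, so it is $(k-2)$-edge-connected; and its cost is at most $c^Tx=\tau$, since $B$ is paid for directly and every later purchase was charged to an equal decrease of the LP value.

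\textbf{Where parity enters, and the $(3/2,k-1)$ point.} The reason even $k$ yields the sharp cost factor $1$ at slack $2$ is that every residual value $r(S)$ stays even throughout the run, so the laminar counting produces an edge of LP value $\ge\f12$ to round up while the combined effect of all relaxed constraints lowers any single cut by at most $2$; dropping the parity only guarantees, in the worst case, an edge of value $\ge\f23$ (equivalently, one must sometimes round two value-$\f12$ edges together), which is exactly the $(3/2,\cdot)$ regime. Concretely, to get $(3/2,k-1)$ for all $k$ I would run the same engine with the slack set to $1$; each purchase is then charged at rate $\le\f32$ against the LP decrease, giving total cost $\le\f32\tau$ and a $(k-1)$-edge-connected subgraph.

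\textbf{Main obstacle.} The hard part will be the joint accounting inside the engine: one must simultaneously guarantee (a) that every cut of the final integral graph is short of $k$ by at most the slack --- the connectivity guarantee, delicate because naively relaxing many small residual constraints could compound on a single cut, so parity together with the laminar structure must be used to control this --- and (b) that every edge bought during rounding can be charged, at rate $1$ (even $k$, slack $2$) or $3/2$ (slack $1$), against a genuine decrease of the residual LP optimum, which is an amortization/token argument over the whole iterative process. Keeping (a) and (b) compatible at slack $2$ with cost factor exactly $1$ is the crux, and it is precisely there that the evenness of $k$ is used.
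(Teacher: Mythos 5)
Your reduction from odd $k$ to even $k$ (scale $x$ by $\frac{k-1}{k}$ and run the even-case algorithm with requirement $k-1$) is exactly the paper's, and your general architecture -- maintain integral edges, relax the constraints of cuts that have accumulated enough integral edges, iterate on extreme points -- is also the right one. But the core of your even-$k$ argument has a gap that breaks the cost factor $1$. You propose to ``buy an edge whose LP value is large, charging its cost against the drop of the residual LP optimum,'' and your parity discussion only promises an edge of value $\ge \frac12$. Buying such an edge costs $c_e$ while the LP value drops by at most $x_e c_e$, so this charging scheme yields cost factor $1/x_e \le 2$, not $1$. (It does work at value $\ge \frac23$ for the $(3/2,k-1)$ point, which is why that part of your sketch is sound in spirit.) The paper's resolution is that the even-$k$ algorithm \emph{never rounds}: it proves (Lemma~\ref{l:main}) that once you relax the constraint of every cut $S$ with $d_I(S)\ge k-2$ by $2$, every extreme point of the relaxed LP already has a coordinate in $\{0,1\}$. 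The argument is by contradiction via Lemma~\ref{l:J3}: a minimal tight set $C$ in a laminar defining family has $d_F(C)\le 3$, hence $x(\de_F(C))\le 2$ by integrality of $f$, hence $d_I(C)\ge k-2$, hence $C\in\RR$ and $f(C)\le 0$, contradicting tightness with positive demand. Nothing in your proposal substitutes for this step, and without it the factor-$1$ accounting does not close.

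Two further points. First, you assert that the residual function is ``symmetric uncrossable'' and that ``the standard uncrossing argument'' yields the laminar family. That is true for $k-d_I(S)$, but the relaxed function $f$ in (\ref{f}) -- with the extra $-2$ on sets in $\RR$ -- is \emph{not} supermodular under the standard uncrossings; establishing the laminar $x$-defining family for it is the paper's main technical content (Lemmas~\ref{l:diag}--\ref{l:two}), including one case that requires the non-standard uncrossing by $A\cup B$ and $B\sem A$. Second, your stated role of parity (``every residual value stays even'') is not where evenness is used and is not true along the run; the paper uses evenness only in Lemma~\ref{l:mu}, to round the integer inequality $2d_I(C_i,C_j)\ge k-1$ up to $d_I(C_i,C_j)\ge k/2$, which is what makes the corner-set case analysis go through.
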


Note that for $k$ odd the cost of the produced solution is $\left(1-\f{1}{k}\right)\opt$,
which is strictly less than $\opt$.
For example, for $k=5$ the algorithm computes a $2$-edge-connected subgraph of cost 
at most $4/5$ times the optimal cost of a $5$-edge-connected subgraph. 

After this paper was written we became aware that 
the same result was achieved earlier by Kumar and Swamy \cite{KS},
that also obtained additional results for degree-bounded versions of {\kECSS} and {\kECSM}.
But our work was done independently, and our proof is simpler than that of \cite{KS}. 

The {\sc $k$-Edge-Connected Spanning Multi-subgraph} ({\kECSM}) problem 
is almost the same as {\sc $k$-ECSS}, 
except that any edge can be selected multiple times at the same cost.
For {\kECSM}, approximation ratios $3/2$ for $k$ even and $3/2+O(1/k)$ for $k$ odd 
were known long time ago \cite{FJ1,FJ2}, and it was conjectured by Pritchard \cite{P} that 
{\kECSM} admits approximation ratio $1+O(1/k)$. 
Hershkowitz, Klein, and Zenklusen \cite{HKZ} showed that {\kECSM} 
has an $\left(1+\Omega(1/k)\right)$-approximation threshold 
and that a $(1,k-p)$ bicriteria approximation for {\kECSS} 
w.r.t. the standard Cut-LP implies approximation ratio  $1+p/k$ for {\kECSM}. 
Thus the novel result of \cite{HKZ} --  a $(1,k-10)$ bicriteria approximation for {\kECSS}, 
implies an approximation ratio $1+10/k$ for {\kECSM}
(improving the previous ratio $1+O(1/\sqrt{k})$ of  \cite{KKGZ}).
This was reduced to $1+4/k$ in \cite{CN}.
Using Theorem~\ref{t:main} we improve this as follows. 

\begin{corollary}
{\kECSM} admits approximation ratio $1+\f{2}{k}$ for $k$ even and $1+\f{3}{k}$ for $k$ odd,
where for $k$ odd the computed subgraph is $(k+1)$-edge-connected 
and has cost at most $1+\f{3}{k}$ times the cost of a $k$-connected subgraph.
\end{corollary}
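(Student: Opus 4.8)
The plan is to reduce {\kECSM} to the bicriteria {\kECSS} algorithm of Theorem~\ref{t:main} via a parallel‑copies construction, choosing the connectivity parameter so that it is \emph{even} (the regime where Theorem~\ref{t:main} loses only $2$ in connectivity at cost factor $1$). Given a {\kECSM} instance $(G,k)$, I would set $p=2$ if $k$ is even and $p=3$ if $k$ is odd, so that $K:=k+p$ is even; form the multigraph $G''$ obtained from $G$ by replacing every edge with $K$ parallel copies of the same cost; and run the $(1,K-2)$-bicriteria algorithm of Theorem~\ref{t:main} on $(G'',K)$ (valid since $K$ is even, and $G''$ is $K$-edge-connected because $G$ may be assumed connected, else {\kECSM} is infeasible). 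Its output is a sub‑multigraph of $G''$ that is $(K-2)$-edge-connected, i.e.\ $k$-edge-connected for $k$ even and $(k+1)$-edge-connected for $k$ odd; read back as a multi-subgraph of $G$ it is a feasible {\kECSM} solution whose cost is at most the optimum of the Cut-LP~\eqref{LPk} for $(G'',K)$.

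It then remains to bound that LP optimum by $\left(1+\f pk\right)\opt$, where $\opt$ denotes the optimal {\kECSM} cost for $(G,k)$. For this I would pass to the relaxation $\tau(G,q)=\min\{c^Tx:\ x(\de_E(S))\ge q\ \ \forall\,\empt\ne S\subset V,\ x\ge 0\}$ obtained by dropping the upper bounds, and use three facts, each a one-liner. First, the feasible region for requirement $q$ is exactly $q$ times that for requirement $1$, so $\tau(G,q)$ is linear in $q$; in particular $\tau(G,K)=\f Kk\,\tau(G,k)$. Second, the multiplicity vector of an optimal $k$-edge-connected multi-subgraph of $G$ is integral and feasible for requirement $k$, so $\tau(G,k)\le\opt$. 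Third, starting from an optimal solution $x^*$ of $\tau(G,K)$ — which may be assumed to satisfy $x^*_e\le K$ for all $e$, since capping any larger coordinate at $K$ leaves all cut constraints satisfied — splitting $x^*_e$ evenly among the $K$ copies of $e$ (each receiving $x^*_e/K\le 1$) yields a feasible solution of the same cost for the Cut-LP~\eqref{LPk} of $(G'',K)$. Chaining these, the Cut-LP optimum of $(G'',K)$ is at most $\tau(G,K)=\f Kk\tau(G,k)\le\f Kk\opt=\left(1+\f pk\right)\opt$, which together with the first paragraph gives ratio $1+\f2k$ with a $k$-edge-connected output for $k$ even, and $1+\f3k$ with a $(k+1)$-edge-connected output for $k$ odd.

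I do not expect a genuine obstacle; the one thing to be careful about is that the guarantee must be stated against the relaxation $\tau$ (equivalently against $\opt$, via the second fact) and not against the capped Cut-LP~\eqref{LPk} of $G$ itself. The crucial property is that $\tau$ scales \emph{linearly} with the connectivity requirement, which fails for the integral optimum and is precisely what forces the detour through the multigraph $G''$, where a fractional solution with large coordinates can be distributed over parallel copies. For $k$ even one could instead simply cite the Hershkowitz--Klein--Zenklusen reduction quoted above with $p=2$; the explicit construction is what makes the $(k+1)$-edge-connectivity in the odd case transparent, and it also explains why one cannot avoid overshooting to $k+1$ there, since with the available bicriteria ratios no admissible even $K$ has output connectivity equal to an odd $k$.
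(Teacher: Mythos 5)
Your proposal is correct and follows essentially the same route as the paper: replace each edge by $k+p$ parallel copies with $p=2$ ($k$ even) or $p=3$ ($k$ odd) so that $k+p$ is even, run the $(1,k+p-2)$-bicriteria algorithm of Theorem~\ref{t:main}, and bound the resulting Cut-LP value by $\left(1+\f{p}{k}\right)$ times the $k$-ECSM optimum via linear scaling of the unbounded relaxation. You merely spell out more explicitly the identification of the $(k+p)$-ECSM LP on $G$ with the Cut-LP on the multiplied graph, which the paper treats as immediate.
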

\begin{proof}
Given an instance of $k$-ECSM, we multiply every edge $k+p$ times 
and compute a bicriteria approximate solution as in Theorem~\ref{t:main} to {\sc $(k+p)$-ECSS}, where 
$p=2$ if $k$ is even and $p=3$ if $k$ is odd; in both cases, $k+p$ is even, 
so the connectivity guarantee is $k+p-2$.
One can see that if $x$ is a feasible {\kECSM} LP solution
then $x \cdot \f{k + p}{k}$ is a feasible {\sc $(k+p)$-ECSM} LP solution.
Denoting by $\tau[\mbox{\sc $\ell$-ECSS}]$ and $\tau[\mbox{\sc $\ell$-ECSM}]$ 
the optimal value of an {\sc $\ell$-ECSS} and {\sc $\ell$-ECSM} LP solutions, respectively, we get:
\[
\tau[ \mbox{\sc $(k+p)$-ECSS}] = \tau[\mbox{\sc $(k+p)$-ECSM}] \le (1+p/k) \tau[\mbox{\sc $k$-ECSM}] \ .
\]
If $k$ odd then $k+p-2=k+1$, hence the returned subgraph is $(k+1)$-edge-connected.
\end{proof}

We highlight the main difference between 
our algorithm and those in previous works \cite{HKZ,CN}.
The algorithm of \cite{HKZ} maintains a set $I$ of integral edges and a set of $F$ of fractional edges.
It repeatedly finds an extreme point optimal solution $x$ of the residual LP, 
removes edges $e$ with $x_e=0$, and moves from $F$ to $I$ edges $e$ with $x_e=1$. 
If no such edges exist, it tries to find a set $C$ such that 
all edges with both ends in $C$ and at least $k-2$ edges  
with one end in $C$ are integral. 
If such $C$ is found, the cut constraint of $C$ is dropped from the LP and $C$ is contracted.
Else, it is shown that there exists a pair of contracted nodes $u,v$ with almost $k/2$ integral $uv$-edges. 
The algorithm then adds a ``ghost edge'' between $u$ and $v$, 
relaxing the constraints of  all $uv$-cuts;
during the algorithm the ghost edges are treated as ordinary edges, 
but they do not appear in the returned solution.
Summarizing, cut constraints are relaxed by three operations: 
dropping the cut constraints of a set $C$ with low fractionality, 
contracting such $C$, and adding ghost edges.
This algorithm may add parallel ghost edges, and gives connectivity guarantee $k-10$.
Nutov and Cohen \cite{CN} showed that adding parallel ghost edges can be avoided,
and improved the connectivity guarantee to $k-4$.

Our approach is simpler -- when $k$ is even, we just drop the constraints of all cuts 
that accumulated at least $k-2$ integral edges.
In every iteration we remove from $F$ edges $e$ with $x_e=0$, 
or move from $F$ to $I$ edges $e$ with $x_e=1$, 
where the later operation also may lead to relaxing some cut constraints.
Note that we never round, but just relax some cut constraints.
For $k$ odd, we just run the same algorithm with $k$ replaced by $k'=k-1$. 

However, for this approach to work we need to prove that any extreme point solution $x$
always has an edge $e$ such that $x_e \in\{0,1\}$. 
For this, we will show that there exists a laminar family 
$\LL$ of tight sets such that $x$ is the unique solution to the corresponding equation system.
Unfortunately, a standard uncrossing 
by intersection-union $A \cap B, A \cup B$ or 
by the differences $A \sem B,B \sem A$
does not suffice here. 
We will isolate one case where we will need 
to uncross with two ``non-standard'' sets, say $A \cup B,B \sem A$, 
see Lemma~\ref{l:two} and Fig.~\ref{f:cases}(c). 

To understand the difficulty of the problem it is also worth to ask -- 
what was the best known bicriteria approximation of the form $(1,k-p)$ for unit costs?
Gabow, Goemans, Tardos and Williamson \cite{GGTW},
showed that {\sc Min Size $k$-ECSS} achieves approximation ratios 
$1+2/k$ for $k$ even and $1+3/k$ for $k$ odd w.r.t. the Cut-LP.
This was improved by Gabow and Gallagher \cite{GG} to
$1+\f{21}{11k}$ for (multi-)graphs, and 
$1+\f{1}{k}+\f{q}{k^2}$ for simple graphs for some constant $q$, 
To obtain a $(1,k-p)$ bicriteria approximation, 
take an optimal LP solution $x$ to {\sc $k$-ECSS}, 
and apply the \cite{GG} approximation algorithm on the vector $\f{k-p}{k}x$,
which is a feasible LP solution to {\sc $(k-p)$-ECSS}
(note that in {\kECSS} we can scale down but not up, due to the constraints $x_e \le 1$).
In the multi-graph case the solution size is bounded by
$h(p) =\left(1-\f{p}{k} \right) \cdot \left( 1+\f{21}{11k} \right)=1+\f{21-11p}{11k}-\f{21p}{11k^2}$.
Then we get $h(p) \le 1$ if $\f{21 p}{k} \ge 21-11 p$, which holds for any $k$ if $p=2$. 
In the case of simple graphs, the bound is 
$h(p) =\left(1-\f{p}{k} \right) \cdot \left(1+\f{1}{k}+\f{q}{k^2}\right)$ and then,
by the same method, we get a $(1, k-1)$ bicriteria approximation for $k \ge 2q+1$;
but note again that this is so only for simple graphs.
So to the best of our knowledge, $p=2$ was the best known value of $p$ for unit costs.

We now survey some related work.
{\kECSS} is an old fundamental problem in combina\-torial optimization and network design.
The case $k=1$ is the {\sc MST} problem, while
Pritchard \cite{P} showed that  there is a constant $\epsilon > 0$ such that for any $k \ge 2$,
{\kECSS} admits no $(1+\epsilon)$-approximation algorithm, unless P = NP.
No such hardness of approximation was known for {\kECSM},
and Pritchard \cite{P} conjectured that {\kECSM} admits approximation ratio $1+O(1/k)$.
This conjecture was resolved by Hershkowitz, Klein, and Zenklusen \cite{HKZ} 
who also proved a matching hardness of approximation result.

The {\kECSS} problem and its special cases have been extensively
studied, c.f \cite{Lov,F-aug,FJ1,FJ2,KV,CT,CL,K,GGTW, GG,P, SV,TZ,HKZ} for only a small sample of papers in the field. 
Nevertheless, the current best known approximation ratio for {\kECSS} is still $2$, the same as was four decades ago. 
The $2$-approximation is obtained by bidirecting the edges of $G$, 
computing a min-cost directed spanning subgraph that contains $k$ edge-disjoint dipaths 
to all nodes from some root node (this can be done in polynomial time \cite{E}), 
and returning its underlying graph.
A $2$-approximation can also be achieved with the iterative rounding method \cite{J},
that gives a $2$-approximation also for the more general {\sc Steiner Network} problem, 
where we seek a min-cost subgraph that contains 
$r_{uv}$ edge-disjoint paths for every pair of nodes $u,v \in V$.
The directed version of {\kECSS} was also studied, c.f. \cite{E,GGTW,CT,K}, 
and for this case also no better than $2$-approximation is known, except for special cases. 

Bicriteria approximation algorithms were widely studied for 
degree constrained network design problems such as {\sc Bounded Degree MST} and {\sc Bounded Degree Steiner Network},
c.f. \cite{SL,LZ,LRS} and the references therein.
A particular case of a bicriteria approximation is when only one parameter is relaxed
while the cost/size of the solution is bounded by a budget $B$.
For example, the $(1-1/e)$-approximation for the {\sc Budgeted Max-Coverage} problem 
can be viewed as a bicriteria approximation $(1, 1-1/e)$ for {\sc Set Cover}, 
where only a fraction of $1-1/e$ of the elements is covered by ${\sf opt}$ sets.  
Similarly, in the budgeted version {\sc Budgeted ECSS} of {\kECSS}, instead of $k$ we are given a budget $B$ 
and seek a spanning subgraph of cost at most $B$ that has maximum edge-connectivity $k^*$.

One can view  Theorem~\ref{t:main} as a $k^*-3$ ($k^*-2$ for $k$ even) 
additive approximation for {\sc Budgeted ECSS}, where 
$k^*$ is the maximum edge-connectivity under the budget $B$.
We note that budgeted connectivity problems were studied before for unit costs.
Specifically, in the {\sc Min-Size \kECSS} and {\sc Min-Size \kCSS} problems 
we seek a $k$-edge-connected and $k$-vertex-connected, respectively, spanning subgraph with a minimal number of edges. 
Nutov \cite{N-small} showed that the following simple heuristic (previously analyzed by Cheriyan and Thurimella \cite{CT}) 
is a $(1,k-1)$ bicriteria approximation for {\sc Min-Size \kCSS}:
compute a min-size $(k-2)$-edge-cover $I_{k-2}$ (a spanning subgraph of minimal degree $\ge k-2$) 
and then augment it by an inclusion minimal edge set $F \subs E$ such that $I_{k-2} \cup F$ is $(k-1)$-connected. 
This $(1,k-1)$ bicriteria approximation implies a $k^*-1$ approximation for {\sc Budgeted Min-Size \kCSS},
which is tight, since the problem is NP-hard.

This paper is organized as follows. 
In the next Section~\ref{s:algo} we describe the algorithm 
and show that it can be implemented in polynomial time.
In Section~\ref{s:main} we prove the main theorem about  extreme points of the LP.
Section~\ref{s:main'} describes the $(3/2,k-1)$ bicriteria approximation.
Section~\ref{s:conc} contains some concluding remarks. 

\section{The Algorithm} \label{s:algo}

For an edge set $F$ and disjoint node sets $S,T$ let $\de_F(S,T)$ denote the set of edges in $F$ with 
one end in $S$ and the other end in $T$, and let $d_F(S,T)=|\de_F(S,T)|$.
Let $\de_F(S)=\de_F(S,\bar{S})$ and $d_F(S)=|\de_F(S)|$,
where $\bar{S}=V \sem S$ is the node complement of $S$. 
For $x \in \mathbb{R}^E$ let $x(F)=\sum_{e \in F} x(e)$. 
We will often refer to a proper subset $S$ of $V$ (so $\empt \ne S \subset V$) as a {\bf cut}. 

As in \cite{HKZ,CN}, we will use the iterative relaxation method 
(that previously was used for the {\sc Degree Bounded MST} problem \cite{SL}).
This means that we maintain a set $I$ of integral edges and a set $F$ of fractional edges, 
and while $F \ne \empt$, repeatedly compute an extreme point optimal solution $x$ to the Cut-LP 
and do at least one of the following steps:
\begin{enumerate}
\item
Remove from $F$ edges $e$ with $x_e=0$.
\item
Move from $F$ to $I$ edges $e$ with $x_e=1$.
\item
Relax some cut constraints $x(\de_{F}(S))  \ge k-d_I(S)$ to $x(\de_{F}(S)) \ge k-d_I(S)-q(S)$.  
\end{enumerate}

It would be convenient to consider the residual constraints $x(\de_F(S)) \ge f(S)$ 
for an appropriately defined set function $f$
defined on proper subsets of $V$ by  
\begin{equation} \label{f}
f(S) = \left \{ \begin{array}{ll}
k-d_I(S)-2 \ \ & \mbox{if } S \in \RR \\
k-d_I(S)        & \mbox{otherwise}
\end{array} \right .
\end{equation}
and $f(\empt)=f(V)=0$.
We will assume that $k$ is even and let 
\begin{equation} \label{e:RR}
\RR=\{S:d_I(S) \ge k-2\} \ .
\end{equation}
This function $f(S)$ is a relaxation of the usual residual function $k-d_I(S)$ of {\kECSS}.
When the number $d_I(S)$ of integral edges in a cut $S$ is large enough -- at least $k-2$, 
we subtract $2$ from the demand of $S$, 
which is equivalent to dropping the constraint of $S$. 
We thus consider the following LP-relaxation with $f$ as in (\ref{f}) 
where $\RR$ given in (\ref{e:RR}).
\begin{equation} \label{LP}
\displaystyle
\begin{array} {lllllll} 
& \min             & c^{T} \cdot x              &                                                  \\ 
& \ \mbox{s.t.} & x(\de_F(S)) \ge f(S) & \forall \empt \ne S \subset V  \\
&                      & 0 \le x_e \le 1           & \forall e \in F    
\end{array}
\end{equation}

Algorithm~\ref{alg:main} below is for $k$ even.
For $k$ odd, we run the same algorithm with $k$ replaced by $k'=k-1$. 
The connectivity guarantee then is $k'-2=k-3$.  
It is easy to see that for any $k' \le k-1$, if $x$ is a feasible solution to LP (\ref{LPk})
then $\f{k'}{k}x$ is a feasible solution to for this LP with $k'$. 
This implies that the optimal LP value for $k'$ is at most $k'/k$ the optimal value for $k$, 
and for $k'=k-1$ this gives cost approximation $\f{k-1}{k}=1-\f{1}{k}$. 

\medskip 

\begin{algorithm}[H]
\caption{A bicriteria $(1,k-2)$-approximation for $k$ even} \label{alg:main}
{\bf initialization}: $I \gets \empt$, $F \gets E$ \\
\While{$F \ne \empt$}
{
compute an extreme point solution $x$ to LP  (\ref{LP}) with $f$ in (\ref{f}) and $\RR$ in (\ref{e:RR}) \\
remove from $F$ every edge $e$ with $x_e=0$ \\
move from $F$ to $I$ every edge $e$ with $x_e=1$ 
}
\Return{$I$}
\end{algorithm}

\medskip 

In Section~\ref{s:main} we will prove the following lemma about extreme point solutions of LP (\ref{LP}).

\begin{lemma} \label{l:main}
Let $x$ be an extreme point of the polytope 
$P=\{x \in [0,1]^F:x(\de_F(S)) \ge f(S)\}$ where $k$ is even,
$f$ is defined in (\ref{f}), 
$\RR$ is defined in (\ref{e:RR}), and 
$F \ne \empt$. 
Then there is $e \in F$ such that $x_e \in \{0,1\}$. 
\end{lemma}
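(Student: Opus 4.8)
The plan is to run the standard iterative‑relaxation uncrossing argument for extreme points, the one new ingredient being that the residual function $f$ of~(\ref{f}) is \emph{not} globally supermodular — subtracting $2$ on the sets of $\RR$ spoils supermodularity — so the uncrossing step will need one extra case. Suppose for contradiction that $0<x_e<1$ for every $e\in F$. Then all constraints $0\le x_e\le 1$ are slack at $x$, so, $x$ being an extreme point, the tight cut constraints $x(\de_F(S))=f(S)$ at $x$ have incidence vectors $\chi_{\de_F(S)}$ spanning $\mathbb{R}^F$; discard those with $\de_F(S)=\empt$ and call the resulting family $\TT$. The key preliminary observation is that $\TT\cap\RR=\empt$: if $S\in\RR$ then $f(S)=k-d_I(S)-2\le 0$, while $x(\de_F(S))\ge 0$, so tightness forces $x(\de_F(S))=0$, hence (all $x_e>0$) $\de_F(S)=\empt$, which was excluded. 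Therefore every $S\in\TT$ has $f(S)=k-d_I(S)$ with $d_I(S)\le k-3$, so $f(S)\ge 3$; and since each $x_e<1$, in fact $|\de_F(S)|\ge f(S)+1\ge 4$.

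Next I would extract from $\TT$ a \emph{laminar} subfamily $\LL$ with the same span, namely $\mathbb{R}^F$, whose vectors $\{\chi_{\de_F(S)}:S\in\LL\}$ are linearly independent, so that $|\LL|=|F|$. This is the usual uncrossing: pick $\LL\subseteq\TT$ laminar with independent cut vectors of maximum size; if $\mathrm{span}(\LL)\ne\mathbb{R}^F$, choose $S\in\TT$ with $\chi_{\de_F(S)}\notin\mathrm{span}(\LL)$ that crosses as few members of $\LL$ as possible; if $S$ crosses no member, then $\LL\cup\{S\}$ is laminar and independent, contradicting maximality, and otherwise $S$ crosses some $B\in\LL$ and we ``uncross'' $\{S,B\}$, replacing it by a pair of tight sets such that $\chi_{\de_F(S)}$ lies in the span of $\chi_{\de_F(B)}\in\mathrm{span}(\LL)$ and the two new cut vectors; then at least one new tight set has cut vector outside $\mathrm{span}(\LL)$ and crosses fewer members of $\LL$ than $S$ did, contradicting the minimal choice of $S$. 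For crossing tight $A,B$ the standard uncross replaces $\{A,B\}$ by $\{A\cap B,\,A\cup B\}$ or by $\{A\sem B,\,B\sem A\}$, using submodularity of $d_F$ and of $d_I$, positivity of $x$, and supermodularity of $f$ on the four sets involved. Here $f$ is supermodular only outside $\RR$: the $\cap/\cup$ uncross needs $A\cap B,A\cup B\notin\RR$, the difference uncross needs $A\sem B,B\sem A\notin\RR$. Since $A,B\notin\RR$ (all members of $\TT$ are), submodularity of $d_I$ gives $d_I(A\cap B)+d_I(A\cup B)\le d_I(A)+d_I(B)\le 2(k-3)$ and likewise $d_I(A\sem B)+d_I(B\sem A)\le 2(k-3)$, so at most one of $A\cap B,A\cup B$ lies in $\RR$ and at most one of $A\sem B,B\sem A$ does. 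Hence, up to exchanging $A$ and $B$, both standard uncrossings can fail only in one remaining configuration — one of $\{A\cap B,A\cup B\}$ together with $A\sem B$ lies in $\RR$ — and this is precisely the case treated by Lemma~\ref{l:two}, which uncrosses with the non‑standard \emph{nested} pair $\{A\cup B,\,B\sem A\}$ (Fig.~\ref{f:cases}(c)), supplying two tight sets and the incidence‑vector relation needed to continue.

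Finally, from $|\LL|=|F|$ I would obtain the contradiction by a Jain‑style token count: give each edge of $F$ two tokens, one to the inclusionwise‑minimal member of $\LL$ containing each of its two endpoints, and push tokens up the laminar forest. Linear independence of the $\chi_{\de_F(S)}$ guarantees that every $S\in\LL$ can retain at least two tokens; moreover here the bound $|\de_F(S)|\ge 4$ holds for \emph{every} $S\in\LL$ (not just for the leaves), because $f\ge 3$ and the edges are strictly fractional, which forces a strictly positive surplus of tokens. Then the $2|F|$ tokens cannot be absorbed by $|\LL|=|F|$ sets each needing two, a contradiction; hence some $x_e\in\{0,1\}$.

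The step I expect to be the real obstacle is the non‑standard uncrossing of Lemma~\ref{l:two}: first verifying that the only configuration escaping both standard uncrossings is the one above, and then showing that uncrossing it with $\{A\cup B,\,B\sem A\}$ (or its symmetric variant) indeed yields two tight sets that, modulo $\mathrm{span}(\LL)$, replace the crossing pair while strictly decreasing the crossing count. The preliminary reductions and the token count are routine.
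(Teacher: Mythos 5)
Your overall architecture is the paper's: assume all $x_e$ strictly fractional, extract a laminar $x$-defining family by uncrossing tight sets, and finish with Jain's degree-$\le 3$ lemma (your token count is that lemma unwound, and your observation that $d_F(S)\ge 4$ for every $S\in\TT$ because $f(S)=k-d_I(S)\ge 3$ and all $x_e<1$ is exactly how the paper derives the final contradiction). Your reduction of the uncrossing to a single bad configuration is in fact slightly cleaner than the paper's: submodularity and posimodularity of $d_I$ give at most one of $\{A\cap B,A\cup B\}$ and at most one of $\{A\sem B,B\sem A\}$ in $\RR$, so both standard uncrossings can fail only when exactly two \emph{adjacent} corner sets lie in $\RR$; the paper instead excludes three or four corners in $\RR$ by a separate counting argument (Lemma~\ref{l:three}).

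The gap is that you never prove the one case that remains, and that case is the entire content of the lemma. You cite Lemma~\ref{l:two} and Fig.~\ref{f:cases}(c) as ``the case treated by'' a non-standard uncrossing and explicitly defer its verification; but this is precisely where the hypothesis that $k$ is even enters, and your sketch never invokes evenness anywhere. Concretely, with $C_1=A\cap B,\,C_2=A\sem B\in\RR$ one first needs $d_I(C_1,C_2)\ge k/2$: from $2d_I(C_1,C_2)=d_I(C_1)+d_I(C_2)-d_I(A)\ge 2(k-2)-(k-3)=k-1$ one gets $d_I(C_1,C_2)\ge k/2$ only because $k$ is even (Lemma~\ref{l:mu}); for $k$ odd the bound degrades to $(k-1)/2$ and, as the paper's remark after Lemma~\ref{l:mu} notes, the argument breaks -- which is why the odd case is handled by passing to $k-1$. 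From that bound a chain of equalities on $\psi(S,T)=d_I(S,T)+x(\de_F(S,T))$ must then be extracted to force $\de_F(C_1,C_2)=\empt$, $\psi(C_1,C_3)=\psi(C_2,C_4)=0$, and the tightness of $C_3$ and $C_4$, which is what legitimizes the identity $\chi_A+2\chi_B=\chi_{A\cup B}+\chi_{B\sem A}$ underlying the nested-pair uncrossing. As written, your proposal reduces the lemma to its hardest step and stops there.
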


Assuming Lemma~\ref{l:main}, we show that the algorithm can be implemented in polynomial time.

\begin{lemma} \label{l:m}
The algorithm terminates after at most $2n$ iterations, where $n=|V|$.
\end{lemma}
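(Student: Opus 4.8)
The plan is to argue that in every iteration at least one edge leaves $F$, and that at most $2n$ edges can ever enter $F$ — here $F$ starts as $E$ and edges never re-enter once removed, but we must be careful that the LP is over a multigraph and $|E|$ need not be $O(n)$, so ``$2n$ iterations'' must come from somewhere sharper than just ``$F$ shrinks''. First I would invoke Lemma~\ref{l:main}: at every iteration the extreme point $x$ has some edge $e$ with $x_e\in\{0,1\}$, so at least one edge is removed from $F$ (either deleted when $x_e=0$, or moved to $I$ when $x_e=1$) in lines 4--5 of Algorithm~\ref{alg:main}. Hence the algorithm makes progress and terminates; the only real content is the bound $2n$ on the number of iterations, i.e.\ on the number of edges that are ever removed from $F$.

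The key step is to bound the number of edges deleted with $x_e=0$ and the number moved to $I$ with $x_e=1$ separately, the latter being the easy half. For the ``moved to $I$'' edges: I would show that each such move strictly decreases a suitable potential tied to the laminar family $\LL$ of tight sets witnessing uniqueness of $x$ (from Lemma~\ref{l:main}), or more simply argue via a standard token/counting argument that the total number of integral-edge moves over the whole run is $O(n)$. The cleanest route is: whenever we move edges to $I$, consider the maximal laminar family of tight constraints supporting $x$; an extreme point in $\mathbb{R}^F$ is determined by $|F|$ linearly independent tight constraints, at most $|\LL|\le 2n-1$ of which are cut constraints, so the number of fractional edges at that moment is at most $2n-1$. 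For the ``$x_e=0$'' deletions the point is that they do not recur and each removes a distinct edge, so across all iterations these contribute at most the number of edges that are ever fractional, which by the same laminar-support bound is at most $2n-1$ at the start of the relevant accounting; combining the two gives the $2n$ bound.

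Concretely I would structure it as: (i) by Lemma~\ref{l:main} each iteration removes $\ge 1$ edge from $F$, so iterations $\le$ total edges removed from $F$; (ii) an edge removed from $F$ is either deleted forever or placed in $I$; (iii) bound $|I|$ at termination — since $I$ at the end is a $(k-2)$-edge-connected spanning subgraph produced with the relaxation, but more to the point, each edge enters $I$ exactly once and the laminar uncrossing argument caps the number of relevant tight cuts, yielding $|I|=O(n)$; (iv) bound the number of $x_e=0$ deletions by noting each iteration's extreme point lives in a face cut out by $\le 2n-1$ tight laminar cut constraints plus box constraints, so $|F|$ drops to at most $2n-1$ after the first iteration and thereafter every deletion is charged to one of these at most $2n-1$ ``survivor'' slots.

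The main obstacle I expect is item (iv): the naive statement ``$F$ has $\le 2n-1$ edges'' is false right after initialization (when $F=E$ can be huge), so the first iteration could in principle delete many edges — but that is fine, a single iteration deleting many zero-edges still counts as one iteration, and the subtle point is ensuring that after the first iteration $|F|\le 2n-1$ and that each subsequent iteration either deletes $\ge 1$ zero-edge or moves $\ge 1$ edge to $I$, with the number of $I$-moves also bounded by $O(n)$. So the crux is proving, via the laminar family from Lemma~\ref{l:main}, the uniform bound $|F|\le 2n-1$ at the \emph{end} of each iteration; granting that, the iteration count is at most $1 + (2n-1) + |I| \le 2n$ after absorbing constants, where the first term is the big initial iteration, the second bounds zero-deletions, and the third bounds integral moves. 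I would make the constant bookkeeping precise only at the end, since it is routine once the $|F|=O(n)$ and $|I|=O(n)$ bounds are in hand.
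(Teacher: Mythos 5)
Your two key ingredients are exactly the paper's: (a) by Lemma~\ref{l:main} every iteration removes at least one edge from $F$, and (b) the laminar support bound shows that at most $2n-1$ edges are fractional in the first extreme point, so $|F|\le 2n-1$ once the first iteration has cleared out all integral-valued edges. Combining just these two facts already finishes the proof: $F$ never gains edges, the first iteration is a single iteration no matter how many edges it removes, and each of the at most $2n-1$ subsequent iterations removes at least one of the at most $2n-1$ surviving edges, giving $1+(2n-1)=2n$.

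The bookkeeping you append on top of this is both unnecessary and partly wrong. An edge moved to $I$ \emph{also} leaves $F$, so there is no reason to count zero-deletions and $I$-moves separately, and your final tally $1+(2n-1)+|I|\le 2n$ is arithmetically false whenever $|I|>0$. Worse, the intermediate claim ``$|I|=O(n)$ at termination'' is simply untrue: the output is a $(k-2)$-edge-connected spanning subgraph of a multigraph, so $|I|\ge (k-2)n/2$, which is not $O(n)$ for large $k$. Drop the attempt to bound $|I|$ and charge every post-first-iteration removal (of either kind) to one of the $\le 2n-1$ fractional edges surviving the first iteration; that is the paper's argument and it closes the proof cleanly.
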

\begin{proof}
Let $x$ be an extreme point computed at the first iteration. 
It is known (c.f. \cite{GGTW}) that $|\{0<x_e <1:e \in F\}| \le 2n-1$, namely, at most $2n-1$ variables $x_e$ are fractional.
This follows from two facts. 
The first is that after the integral entries are substituted, the fractional entries are determined 
by a full rank system of constraints $\{x(\de_F(S))=k:S \in \LL\}$ where $\LL$ is laminar, c.f. \cite{J,GGTW}. 
The second is that a laminar family on a set of $n$ elements has size at most $2n-1$.
Thus already after the first iteration, $|F| \le 2n-1$. 
At every iteration an edge is removed from $F$, thus the number of iteration is at most $2n$.
\end{proof}

To show a polynomial time implementation it is sufficient to prove the following.

\begin{lemma} \label{l:impl}
At any step of the algorithm, an extreme point solution $x$ to LP  (\ref{LP}) 
with $f$ in (\ref{f}) and $\RR$ in (\ref{e:RR})
can be computed in polynomial time. 
\end{lemma}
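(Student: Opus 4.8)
The plan is to use the equivalence of separation and optimization: it suffices to give a polynomial-time separation oracle for the polytope $P=\{x\in[0,1]^F:x(\de_F(S))\ge f(S)\ \forall\,\empt\ne S\subset V\}$, from which an optimal \emph{vertex} of $P$ is obtained in polynomial time by standard techniques (e.g.\ by iteratively restricting to the face cut out by the currently tight constraints and re-optimizing). The key observation for the oracle is the one already built into~(\ref{f}): if $S\in\RR$, i.e.\ $d_I(S)\ge k-2$, then $f(S)=k-d_I(S)-2\le 0$, so the constraint of $S$ is satisfied automatically. Extend $x$ to $y\in\mathbb{R}^E$ by $y_e=x_e$ for $e\in F$ and $y_e=1$ for $e\in I$, so that $y(\de_E(S))=x(\de_F(S))+d_I(S)$. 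Then $x\in P$ iff every cut $S$ with $d_I(S)\le k-3$ has $y(\de_E(S))\ge k$, so the separation problem becomes: decide whether some cut $S$ with $d_I(S)\le k-3$ has $y(\de_E(S))<k$, and, if so, exhibit one.

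Let $m_0$ be the value of a minimum $y$-cut $S_0$, found by one minimum-cut computation. If $m_0\ge k$, then for every $S$ we have $x(\de_F(S))=y(\de_E(S))-d_I(S)\ge m_0-d_I(S)\ge k-d_I(S)\ge f(S)$, so $x\in P$. If $m_0<k-2$, then $d_I(S_0)\le y(\de_E(S_0))=m_0$, and since $d_I(S_0)$ is an integer, $d_I(S_0)\le k-3$; moreover $x(\de_F(S_0))=m_0-d_I(S_0)<k-d_I(S_0)=f(S_0)$, so $S_0$ is a violated constraint and the oracle returns it. In the remaining case $k-2\le m_0<k$ (assume $k\ge 4$; for $k=2$ there is no cut with $d_I\le k-3$, so $P=[0,1]^F$), I would enumerate all cuts $S$ with $y(\de_E(S))<k$: since $m_0\ge k-2$ each such cut has weight $<k\le 2m_0$, so there are at most as many of them as there are $2$-approximate minimum $y$-cuts, namely $n^{O(1)}$, and they can be listed in polynomial time. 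For each, test whether $d_I(S)\le k-3$: if some listed cut passes the test it is a violated constraint and is returned; if none does, then every cut with $y(\de_E(S))<k$ has $d_I\ge k-2$, hence $x\in P$.

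The step I regard as the crux is this last case, and the point is that a naive oracle inspecting only a single minimum $y$-cut fails: that cut may have $d_I(S_0)\in\{k-2,k-1\}$ and hence be one of the \emph{dropped} sets, while a genuinely violated set lies strictly above it in value. What rescues the argument is that once $m_0\ge k-2$ the slack between $m_0$ and $k$ is at most an additive $2$, hence a multiplicative factor at most $2$, which is exactly what is needed for near-minimum-cut enumeration to apply with a constant approximation parameter. Turning this into a full proof only requires invoking a standard near-minimum-cut enumeration result, so I do not expect further obstacles here; the genuinely delicate part of the paper is rather the extreme-point analysis of Lemma~\ref{l:main}.
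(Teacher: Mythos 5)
Your proposal is correct and follows essentially the same route as the paper: a separation oracle that catches cuts of $y$-capacity below $k-2$ with a single minimum-cut computation and then enumerates the $2$-approximate near-minimum cuts (capacity in $[k-2,k)$, using $k/(k-2)\le 2$ for $k\ge 4$) to test each for $d_I(S)\le k-3$. The only difference is that you leave $k=3$ unaddressed, which is immaterial here since the lemma's setting assumes $k$ even; the paper nevertheless covers $k=3$ separately by contracting the components of $(V,I)$.
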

\begin{proof}
An extreme point optimal solution can be found using the Ellipsoid Algorithm.
For that, we need the show an existence of a polynomial time separation oracle: 
given $x$, determine whether $x$ is a feasible LP solution or find a violated inequality.
Checking the inequalities $0 \le x_e \leq 1$ is trivial.
An inequality of a cut $S$ is violated if 
\[
d_I(S) \le k-3 \ \  \mbox{ and } \ \ d_I(S)+x(\de_F(S)) <k \ .
\]
For $k=2$ there is nothing to check. 
For $k=3$ the condition reduces to $d_I(S)=0$ and $x(\de_F(S)) <3$, which can be checked 
by computing a minimum cut in the graph obtained 
by contracting each connected component of $(V,I)$ and assigning 
capacities $x_e$ to the edges in $F$.
For $k \ge 4$ we first check if there exists $S$ such that $d_I(S)+x(\de_F(S)) <k-2$ by assigning 
capacity $1$ to edges in $I$, capacities $x_e$ to edges in $F$, and computing a minimum cut.
If such $S$ exists, then its inequality is violated. 
Assume therefore that $k \ge 4$ and that $d_I(S)+x(\de_F(S)) \ge k-2$ for all $S$,
and let ${\cal S}$ be the family of cuts whose capacity is in the range $[k-2,k)$.
Then the problem boils down to checking whether there exists $S \in {\cal S}$
such that $d_I(S) \le k-3$, where the capacity of a minimum cut is $k-2$.
It is known that for any constant $\alpha$ the number of $\alpha$-near minimum cuts 
is $O\left(n^{\lfloor 2\alpha \rfloor}\right)$,
and that they all can be listed in polynomial time \cite{Kar} (see also \cite{BCW}).
Note that $\f{k}{k-2} \le 2$ for $k \ge 4$, thus 
we can list all cuts in ${\cal S}$ in polynomial time, 
and check for each $S \in {\cal S}$ whether $d_I(S) \le k-3$ or not.
\end{proof}

It remain to prove Lemma~\ref{l:main}, which we will do in the next section.

\section{Integrality of extreme points (Lemma~\ref{l:main})} \label{s:main}

Here we will prove Lemma~\ref{l:main}. 
Let $x$ be an extreme point as in Lemma~\ref{l:main}.
We say that a set $S$ is {\bf tight} if 
the LP-inequality of $S$ holds with equality, namely, if  $x(\de_F(S))=f(S)$. 
Let $\TT=\{S: x(\de_F(S))=f(S)>0\}$ be the the set of all tight sets with strictly positive demand. 
Assume by contradiction that $0<x_e<1$ for all $e \in F$. 
Then there exists a family $\LL \subs \TT$ such that 
$x$ is the unique solution to the equation system $\{x(\de_F(S))=f(S):S \in \LL\}$; namely, 
$|\LL|=|F|$ and the characteristic vectors of the sets $\{\de_F(S):S \in \LL\}$ are linearly independent.  
We call such a family $\LL$ {\bf $x$-defining}. We will prove that if $x$ as above exists,  
then the following holds. 

\begin{lemma} \label{l:lami}
There exists an $x$-defining family that is laminar.
\end{lemma}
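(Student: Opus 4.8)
The goal is the standard iterative-rounding strategy: start with an arbitrary $x$-defining family $\LL\subs\TT$ and repeatedly uncross crossing pairs, each time replacing them by sets that are "more laminar" with respect to a fixed potential function, while keeping the span of $\{\chi_{\de_F(S)}:S\in\LL\}$ unchanged and keeping $\LL$ $x$-defining. The potential is the usual one, e.g. $\sum_{S\in\LL}|S|^2$ or $\sum_{S\in\LL}|S|\cdot|\bar S|$, which strictly decreases (or strictly increases, depending on convention) under each uncrossing move, so the process terminates at a laminar family. The heart of the matter is the \emph{submodularity/supermodularity} of the residual function: since $d_I$ is symmetric and submodular, $k-d_I(S)$ is supermodular, and one checks that the modified $f$ in~(\ref{f}) — which subtracts $2$ exactly on $\RR=\{S:d_I(S)\ge k-2\}$ — still satisfies the needed crossing inequalities on tight sets. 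The plan is to prove first a short claim: if $A,B\in\TT$ cross, then at least one of the two "uncrossing pairs" $\{A\cap B,A\cup B\}$, $\{A\sem B,B\sem A\}$, or $\{A\cup B,B\sem A\}$ (and its mirror) consists of tight sets whose characteristic vectors together with the rest of $\LL$ still span $\chi_{\de_F(A)}$ and $\chi_{\de_F(B)}$. Given that claim, replace $A,B$ by that pair; the span is preserved, $|\LL|$ is unchanged, and the potential strictly drops; iterate.

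Concretely I would organize the uncrossing case analysis by the values $d_I$ takes on the four "corner" sets $A\cap B$, $A\cup B$, $A\sem B$, $B\sem A$ relative to the threshold $k-2$, i.e. by which of these sets lie in $\RR$. The clean cases are the classical ones: if neither $A\cap B$ nor $A\cup B$ gains the $-2$ discount inappropriately, submodularity of $d_I$ plus $x\ge 0$ gives
\[
f(A)+f(B)\le x(\de_F(A))+x(\de_F(B))\ge x(\de_F(A\cap B))+x(\de_F(A\cup B))\ge f(A\cap B)+f(A\cup B),
\]
forcing equality throughout, so $A\cap B,A\cup B$ are tight and the standard identity $\chi_{\de_F(A)}+\chi_{\de_F(B)}=\chi_{\de_F(A\cap B)}+\chi_{\de_F(A\cup B)}+2\chi_{\de_F(A\cap B,\overline{A\cup B})}$ — together with tightness forcing $x$ to vanish on the cross edges, or with a parallel argument on the differences side — lets us swap in the new sets without losing span. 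The analogous computation with $d_F(A)+d_F(B)$ and the differences handles the posimodular case.

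The main obstacle — and the novelty flagged in the introduction — is the single residual case where \emph{both} standard uncrossings fail: this happens when exactly one of the corner sets, say $A\cap B$, falls below the threshold so that $f(A\cap B)=k-d_I(A\cap B)$ carries no discount while $f(A)$ and $f(B)$ each do, creating a "deficit of $2$" that breaks the intersection-union inequality, and symmetrically a parity/threshold obstruction breaks the difference inequality. Here I would invoke the promised Lemma~\ref{l:two} (Fig.~\ref{f:cases}(c)) to uncross instead with the mixed pair $A\cup B$ and $B\sem A$: one shows $d_I(A\cup B)+d_I(B\sem A)\le d_I(A)+d_I(B)$ with the right slack (using that $A\cap B\notin\RR$ pins down $d_I$ there, so a cross-edge count argument applies), deduces both $A\cup B$ and $B\sem A$ are tight, and uses the linear identity $\chi_{\de_F(A)}+\chi_{\de_F(B)}=\chi_{\de_F(A\cup B)}+\chi_{\de_F(B\sem A)}+2\,\chi_{\de_F(A\cap B,\,\overline{A}\cap B)}$ (valid up to the appropriate correction term, which tightness forces to be an all-zero coordinate set on $F$) to preserve the span. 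One must also check this mixed move still strictly decreases the chosen potential and that it cannot be applied infinitely often — most simply by running it only after the two standard moves are exhausted, or by checking the potential drop directly. Finally I would note the degenerate subcases $A\cap B=\empt$ or $A\cup B=V$ are handled as usual since $f(\empt)=f(V)=0$, and conclude that the process terminates with the desired laminar $x$-defining family.
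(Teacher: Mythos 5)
Your high-level plan matches the paper's: reduce to showing every crossing pair $A,B\in\TT$ can be uncrossed (the paper runs Jain's minimal-overlap argument on a maximal laminar subfamily rather than a potential function, but that difference is cosmetic), handle the clean cases by supermodularity of $k-d_I$ and the symmetric argument on the differences, and use the mixed pair $A\cup B,\ B\sem A$ in the one remaining case. However, the substantive content of that remaining case --- which is exactly the novel part of the proof --- is missing or wrong in your write-up. First, you misidentify when both standard uncrossings fail. Since $A,B\in\TT$ forces $f(A),f(B)>0$, neither $A$ nor $B$ can lie in $\RR$, so $f(A)$ and $f(B)$ never ``carry the discount'' as you claim; what breaks the intersection--union inequality is a discount on $A\cap B$ or $A\cup B$ (i.e., one of them lying in $\RR$), and what breaks the difference inequality is a discount on $A\sem B$ or $B\sem A$. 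Hence the hard case is when one corner set from each non-adjacent pair is in $\RR$, i.e., two \emph{adjacent} corner sets (say $A\cap B$ and $A\sem B$) both satisfy $d_I\ge k-2$ --- the opposite of your premise that ``$A\cap B\notin\RR$ pins down $d_I$ there.''

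Second, and more importantly, you omit the two facts that make the hard case work. The paper first proves (Lemma~\ref{l:mu}) that for $k$ even any two adjacent corner sets in $\RR$ are joined by at least $k/2$ integral edges --- a parity argument combining $d_I(C_i),d_I(C_j)\ge k-2$ with $d_I(A)\le k-3$ --- and this is the \emph{only} place where evenness of $k$ enters. Using this bound it then shows (Lemma~\ref{l:three}) that at most two corner sets can lie in $\RR$ at all, a case you never rule out, and in the two-adjacent-corners case it uses the same bound to force $\psi(C_1,C_3)=\psi(C_2,C_4)=0$ and $d_F(C_1,C_2)=0$, which simultaneously yields the tightness of $A\cup B$ and $B\sem A$, their membership in $\TT$, and the exact linear identity $\chi_A+2\chi_B=\chi_{A\cup B}+\chi_{B\sem A}$. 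Your proposal asserts the conclusions of these steps (``deduces both $A\cup B$ and $B\sem A$ are tight'') without any argument that could deliver them, so as written the proof has a genuine gap precisely at the point where the result is new.
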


We will now show that Lemma~\ref{l:lami} implies a contradiction 
to the assumption that $0<x_e<1$ for all $e \in E$.
The following was essentially proved by Jain \cite[Theorem 4]{J}, 
see also \cite[Exercise 23.4]{V} and \cite{HKZ,CN}.

\begin{lemma}[\cite{J}] \label{l:J3}
Let $x$ be an extreme point solution for LP (\ref{LP}) with 
any integral set function $f$ 
such that $0<x_e<1$ for all $e \in E$, and suppose that 
there exists an $x$-defining family $\LL$ that is laminar.
Then there is an inclusion-minimal set $C$ in $\LL$ such that $d_F(C) \le 3$. 
\end{lemma}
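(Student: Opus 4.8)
This is the classical combinatorial core of Jain's iterative rounding analysis (it is attributed to~\cite{J}), so the plan is to reproduce his token-counting argument, observing that it never uses the particular shape of $f$ in~(\ref{f}): the only properties it needs are that $\LL$ is laminar, $|\LL|=|F|$, the vectors $\{\chi_{\de_F(S)}:S\in\LL\}$ are linearly independent, $f$ is integer-valued, and $0<x_e<1$ for every $e\in F$.

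I would start with the trivial preliminary observation that $d_F(S)\ge 2$ for every $S\in\LL$: such $S$ is tight, so $x(\de_F(S))=f(S)\in\mathbb Z$; with each $x_e\in(0,1)$ and linear independence forbidding $\chi_{\de_F(S)}=0$, this forces $d_F(S)\ge 2$. Next, I would order $\LL$ by inclusion into a forest, so that the inclusion-minimal members are exactly its leaves, and assume for contradiction that $d_F(C)\ge 4$ for every leaf $C$. Then I would run the token argument: place two tokens per edge $e=uv\in F$, one at $u$ and one at $v$ --- at most $2|F|=2|\LL|$ tokens in all --- and route the token sitting at a vertex $w$ to the smallest member of $\LL$ that contains $w$ (discarding it if there is none). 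Processing the forest bottom-up, I would redistribute the tokens so as to maintain the invariant that \emph{every} $S\in\LL$ is left holding at least $2$ tokens and every inclusion-maximal $S$ at least $3$. For a leaf $C$ this is immediate: it receives $2\cdot|\{e\in F:\text{both ends of }e\text{ lie in }C\}|+d_F(C)\ge d_F(C)\ge 4$ tokens, so it keeps $2$ and forwards the surplus to its parent. For an internal set $S$ with children $S_1,\dots,S_t$, it collects the surpluses forwarded by the children plus the tokens at the vertices of $S\setminus(S_1\cup\cdots\cup S_t)$; linear independence of $\chi_{\de_F(S)},\chi_{\de_F(S_1)},\dots,\chi_{\de_F(S_t)}$ guarantees that enough of the latter exist whenever $t$ is small, since otherwise $\chi_{\de_F(S)}$ would be a $\{0,\pm1\}$-combination of the children's vectors. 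The invariant would then require at least $2|\LL|+1$ tokens to be in use (two per set, plus one extra at each inclusion-maximal set), contradicting the bound of $2|\LL|$; hence some leaf $C$ satisfies $d_F(C)\le 3$, proving the lemma.

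The part that requires genuine care --- and which I expect to be the main obstacle --- is the inductive step at an internal set with one or two children: one has to pin down which edges (between two children, between a child and the ``free part'' $S\setminus\bigcup_i S_i$, inside the free part, or crossing $\de_F(S)$) furnish the missing tokens, verify that nothing is double counted, and read off the linear dependence in each deficient configuration. This is precisely the delicate bookkeeping in Jain's proof, and it carries over here verbatim because it manipulates only the incidence structure of $F$ and $\LL$ and never the definition of $f$; for those details I would simply invoke \cite[Theorem~4]{J} (see also \cite[Exercise~23.4]{V} and \cite{HKZ,CN}).
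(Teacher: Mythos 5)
Your proposal is correct and matches the paper's treatment: the paper states this lemma without proof, citing Jain's Theorem 4 (see also \cite[Exercise 23.4]{V} and \cite{HKZ,CN}), and your sketch is exactly the standard token-counting argument behind that citation, with the delicate internal-node cases properly deferred to the same source. Your observation that the argument uses only laminarity, linear independence, integrality of $f$, and $0<x_e<1$ --- never the specific form of $f$ in (\ref{f}) --- is precisely why the lemma can be invoked for ``any integral set function $f$'' as stated.
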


By Lemma~\ref{l:lami} there exists an $x$-defining laminar family $\LL$, 
while by Lemma~\ref{l:J3} 
there is $C \in \LL$ such that $d_F(C) \le 3$;
this implies $x(\de_F(C)) \le 2$ and $d_I(C) \ge k-2$ 
(by the integrality of $f$),
hence $C \in \RR$, contradicting that $f(C)>0$.
Thus it remains only to prove Lemma~\ref{l:lami}, which we will do in the rest of this section.

\medskip

Two sets $A,B$ {\bf overlap} if $A \cap B,A \sem B,B \sem A$ are nonempty;
if also $A \cup B \ne V$ then $A,B$ {\bf cross}.
For a cut $S$ let $\chi_S \in \{0,1\}^F$ be the incidence vector of $\de_F(S)$.
We say that overlapping $A,B \in \TT$ are {\bf $x$-uncrossable} 
if each one of $\chi_A,\chi_B$ can be expressed as a linear combination 
of the characteristic vector of the other and 
the characteristic vectors of one or two sets in $\TT \cap \{A \cap B,A\cup B,A \sem B, B \sem A\}$. 
The following was essentially proved by Jain \cite[Lemma 4.2]{J}, see also \cite[Theorem 23.12]{V}.

\begin{lemma}[\cite{J}]
If any $A,B \in \TT$ are $x$-uncrossable, then there exists an $x$-defining family that is laminar.
\end{lemma}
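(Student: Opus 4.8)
The plan is to follow Jain's uncrossing argument, building the laminar $x$-defining family one set at a time. First I would record the standard reduction. Since $0<x_e<1$ for every $e\in F$, no box constraint $x_e\in\{0,1\}$ is tight, so the tight constraints at the extreme point $x$ are cut constraints only; a tight cut $S$ has $\chi_S\ne 0$ precisely when $\de_F(S)\ne\empt$, i.e.\ $f(S)>0$, i.e.\ $S\in\TT$. As $x$ is an extreme point of $P\subs\mathbb{R}^F$, the vectors $\{\chi_S:S\in\TT\}$ span $\mathbb{R}^F$. So it suffices to produce a \emph{laminar} subfamily $\LL\subs\TT$ whose characteristic vectors are linearly independent and still span $\mathbb{R}^F$; such an $\LL$ then has $|\LL|=|F|$ and is $x$-defining.

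Next I would let $\LL$ be an inclusion-maximal laminar subfamily of $\TT$ with $\{\chi_S:S\in\LL\}$ linearly independent, and show that these vectors already span $\mathbb{R}^F$ -- equivalently, that $\chi_T$ is in their span for every $T\in\TT$. Suppose not; let $\TT'\subs\TT$ be the nonempty set of tight sets whose characteristic vector lies outside this span, and choose $T\in\TT'$ minimizing the number $o(T)$ of members of $\LL$ that overlap $T$. If $o(T)=0$ then $\LL\cup\{T\}$ is laminar, and $\chi_T$ being outside the span keeps the family independent -- contradicting maximality. So some $A\in\LL$ overlaps $T$, and I would apply $x$-uncrossability to $A,T\in\TT$: $\chi_T$ is a linear combination of $\chi_A$ and the characteristic vectors of one or two sets from $\TT\cap\{A\cap T,\,A\cup T,\,A\sem T,\,T\sem A\}$. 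Since $\chi_A$ lies in the span of $\LL$ while $\chi_T$ does not, at least one such derived set $C$ has $\chi_C$ outside the span, so $C\in\TT'$; and I would derive a contradiction by showing $o(C)<o(T)$. Indeed, whichever of the four sets $C$ is, it is contained in $A$, contains $A$, or is disjoint from $A$, so $C$ does not overlap $A$ although $T$ does; and for every other $B\in\LL$, which by laminarity does not overlap $A$, an elementary check shows that if $C$ overlaps $B$ then so does $T$. Hence the members of $\LL$ overlapped by $C$ form a proper subfamily of those overlapped by $T$, giving $o(C)\le o(T)-1$ -- the desired contradiction. Then $\{\chi_S:S\in\LL\}$ spans $\mathbb{R}^F$ and $\LL$ is the required laminar $x$-defining family.

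The hard part is that last elementary check: that replacing the overlapping pair $A,T$ by one of the four derived sets cannot create new overlaps with the laminar family $\LL$, only destroy the overlap with $A$. Carrying it out means going through how a fixed $B\in\LL$ -- either disjoint from $A$ or nested with it -- can meet each of $A\cap T$, $A\cup T$, $A\sem T$, $T\sem A$, using crucially that $T$ overlaps $A$; this is precisely the potential-decrease step in Jain's Lemma~4.2 \cite{J}, which I would invoke rather than reprove. Everything else -- the passage to $\TT$ afforded by $0<x_e<1$, and the minimal-counterexample choice -- is bookkeeping.
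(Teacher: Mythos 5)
Your proposal is correct and follows essentially the same route as the paper's proof: choose a tight set outside $\span(\LL)$ overlapping the fewest members of $\LL$, use $x$-uncrossability to replace it by a derived set that is still outside the span, and invoke the standard fact that each of $A\cap T, A\cup T, A\sem T, T\sem A$ overlaps strictly fewer members of a laminar family than $T$ does. The only difference is cosmetic (maximal-$\LL$-plus-contradiction versus iterative augmentation), and your explicit preliminary reduction to spanning $\mathbb{R}^F$ matches what the paper establishes in the text preceding the lemma.
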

\begin{proof}
Let $\LL$ be a laminar subfamily of $\TT$ whose incidence vectors are linearly independent.
Let $\span(\TT)$ denote the linear space spanned by the incidence vectors of the sets in $\TT$ 
and similarly $\span(\LL)$ is defined. If $\span(\LL)=\span(\TT)$ then we are done. 
Else, let $S \in \TT$ be a set with $\chi_S \notin \span(\LL)$ that overlaps the minimal number of sets in $\LL$.
We claim that $\LL \cup \{S\}$ is laminar.
Suppose to the contrary that $S$ overlaps some $T \in \LL$. 
By the assumption, $\chi_S$ can be expressed as a linear combination 
of $\chi_T$ and the characteristic vectors of one or two sets $X,Y \in \{S \cap T, S \cup T, S \sem T, T \sem S\} \cap \TT$
(possibly $X=Y$).
Note that $X,Y$ cannot be both in $\span(\LL)$, since otherwise $S \in \span(\LL)$. 
But each of $S \cap T, S \cup T, S \sem T, T \sem S$ overlaps strictly less sets in $\LL$ than $S$, c.f. \cite[Lemma 23.15]{V}.
This contradicts our choice of $S$. 
\end{proof}

Thus to prove Lemma~\ref{l:lami}, it is sufficient to prove the following.

\begin{lemma} \label{l:uncross}
Any $A,B \in \TT$ are $x$-uncrossable
\end{lemma}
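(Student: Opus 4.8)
The plan is to fix overlapping $A,B \in \TT$ and carry out a case analysis based on the values of $d_I$ on $A$, $B$ and on the four derived sets, much as in Jain's classical argument, but keeping careful track of which sets land in $\RR$. Recall the two standard submodularity-type identities for the residual function: writing $\varphi(S) = k - d_I(S)$ for the "un-relaxed" demand, we have $\varphi(A)+\varphi(B) \ge \varphi(A\cap B)+\varphi(A\cup B)$ and $\varphi(A)+\varphi(B) \ge \varphi(A\sem B)+\varphi(B\sem A)$, with the crossing-edge defect terms $2d_I(A\cap B, \overline{A\cup B})$ resp.\ $2d_I(A\sem B,B\sem A)$ measuring the slack; and the same identities hold for $\chi_S$ in $\{0,1\}^F$ (indeed with equality once the defect edges are accounted for). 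The function $f$ differs from $\varphi$ only by subtracting $2$ on sets of $\RR$. The point is that if we can choose a "side" of the uncrossing (intersection--union or the two differences) on which \emph{none} of the two new sets lies in $\RR$ while $A,B$ have the \emph{same} $\RR$-status, then the $f$-correction terms cancel and the usual tight-set uncrossing goes through verbatim, expressing $\chi_{A\cap B}+\chi_{A\cup B}$ (or $\chi_{A\sem B}+\chi_{B\sem A}$) as $\chi_A+\chi_B$ minus nonnegative defect contributions that must vanish, so both new sets are tight and the defect is zero, giving the required linear dependence.

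First I would dispose of the easy regimes. If $A,B \notin \RR$ then $f=\varphi$ at $A$ and $B$; by monotonicity $d_I$ can only drop or stay on passing to any of $A\cap B, A\cup B, A\sem B, B\sem A$ relative to\ldots\ no, that is false, so instead one argues: on the intersection--union side, $d_I(A\cap B)+d_I(A\cup B) \le d_I(A)+d_I(B)$, so at least one of $A\cap B, A\cup B$ has $d_I < k-2$ only if\ldots\ The clean statement is that $\max(d_I(A\cap B),d_I(A\cup B))$ can be as large as $d_I(A)+d_I(B)$ is not bounded; rather I will use: $d_I(A\cap B)+d_I(A\cup B)+2d_I(A\cap B,\overline{A\cup B}) = d_I(A)+d_I(B)$, and similarly for differences. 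So if $A,B\notin\RR$, i.e.\ $d_I(A),d_I(B)\le k-3$, then on \emph{each} side the two new sets have $d_I$-sum at most $2k-6$, hence not both can be in $\RR$; if neither is, we are in the clean case above; if exactly one, say $A\cap B \in \RR$, then its $f$-value is $2$ smaller and we must instead work on the differences side, where I will show both $A\sem B, B\sem A$ avoid $\RR$ (using that $d_I(A\cap B)\ge k-2$ forces, via the two identities, enough integral edges inside to pull $d_I(A\sem B)$ and $d_I(B\sem A)$ down). Symmetrically handle $A\cup B\in\RR$. The case $A,B \in \RR$ is dual: both have $f$ reduced by $2$, and one shows that on at least one of the two sides the correction is "balanced" — either both new sets are in $\RR$ (correction $2+2=4 = 2+2$, matches), or a parity/counting argument (this is where $k$ even enters) forbids the remaining configurations.

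The hard part will be the mixed case $A \in \RR$, $B \notin \RR$, together with the genuinely bad sub-configuration the authors flag: the situation where \emph{neither} the intersection--union side nor the differences side has cancelling corrections, so that one is forced to uncross with the asymmetric pair $A\cup B$ and $B\sem A$ (Figure~\ref{f:cases}(c)). Here I would use the identity $\chi_A + \chi_B = \chi_{A\cup B} + \chi_{B\sem A} + (\text{correction edges})$ — specifically, the edges between $A\sem B$ and $B\cap A$ play the role of the defect — and then account: $d_A + d_B$ versus $d_{A\cup B} + d_{B\sem A}$ differs by $2d_I(A\sem B, A\cap B)$. Under the constraints $d_I(A)\ge k-2$, $d_I(B)\le k-3$ and the failure of the two standard uncrossings, I expect to pin down $d_I(A\cup B)$ and $d_I(B\sem A)$ tightly enough (again invoking $k$ even to rule out the half-integral-looking boundary cases) that both $A\cup B$ and $B\sem A$ are tight, the correction vanishes, and $\chi_A$ is expressed through $\chi_B,\chi_{A\cup B},\chi_{B\sem A}$ as required by the definition of $x$-uncrossable. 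This asymmetric uncrossing is exactly the content of the forthcoming Lemma~\ref{l:two}, so in the writeup I would state and prove that lemma first and then invoke it here; the remaining cases reduce to bookkeeping with the two submodular identities and the observation that every $f$-value and every $d_I$-value is an even integer when $k$ is even, which kills odd-looking intermediate totals.
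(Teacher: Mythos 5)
Your high-level plan points in the right direction (standard uncrossing when a non-adjacent pair of derived sets avoids $\RR$, and an asymmetric uncrossing with $A\cup B,\,B\sem A$ in the bad configuration, which is indeed the content of Lemma~\ref{l:two}), and your counting observation that $d_I(A\cap B)+d_I(A\cup B)\le d_I(A)+d_I(B)\le 2k-6$ correctly shows that no two \emph{non-adjacent} corner sets can both lie in $\RR$. But the case structure is off in a way that matters. Since $A,B\in\TT$ means $f(A),f(B)>0$, and $f(S)\le 0$ whenever $S\in\RR$, the sets $A,B$ are \emph{never} in $\RR$; the regimes ``$A,B\in\RR$'' and the ``mixed case $A\in\RR$, $B\notin\RR$'' that you single out as the hard part are vacuous. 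The genuinely hard configuration is instead: $A,B\notin\RR$ but two \emph{adjacent} corner sets, say $A\cap B$ and $A\sem B$, both lie in $\RR$ — then one set on each of the two standard uncrossing sides is in $\RR$ and both standard uncrossings fail. Your intermediate claim that ``if exactly one of $A\cap B, A\cup B$ is in $\RR$ then both $A\sem B, B\sem A$ avoid $\RR$'' is false precisely in this configuration.

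Two further concrete problems. First, your accounting for the asymmetric uncrossing is wrong: $\chi_A+\chi_B-\chi_{A\cup B}-\chi_{B\sem A}$ is \emph{not} $2\chi_{\de_F(A\sem B,\,A\cap B)}$; writing $C_3=\overline{A\cup B}$, $C_4=B\sem A$, the edges of type $(C_3,C_4)$ appear with coefficient $-1$ in this difference, so it is not a sum of nonnegative correction terms and the ``defect must vanish'' argument does not run. The identity that actually closes the case is $\chi_A+2\chi_B=\chi_{A\cup B}+\chi_{B\sem A}$, and it only holds after one has killed the edge types $(A\cap B,C_3)$, $(A\sem B,C_4)$ and the fractional $(A\cap B,A\sem B)$ edges. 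Second, the engine that kills them is missing from your plan: the paper's Lemma~\ref{l:mu} shows that for $k$ even any two adjacent corner sets in $\RR$ are joined by at least $k/2$ integral edges (from $d_I(C_1)+d_I(C_2)-2d_I(C_1,C_2)=d_I(A)\le k-3$ and $d_I(C_1),d_I(C_2)\ge k-2$, parity gives $d_I(C_1,C_2)\ge k/2$ rather than $(k-1)/2$). This is the one place where $k$ even is used, and feeding this $k/2$ into the coverage counts $\psi(A)=\psi(B)=k$ is what forces all the degeneracies and the tightness of $A\cup B$ and $B\sem A$. Without that quantitative step the hard case does not close, so as written the plan has a genuine gap.
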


Assume that $A,B \in \TT$ cross.
Otherwise, $A \sem B=\bar{B}$, $B \sem A=\bar{A}$, 
and thus $\chi_{A \sem B}=\chi_{\bar{B}}=\chi_B$ and $\chi_{B \sem A}=\chi_{\bar{A}}=\chi_A$,
implying that $A \sem B,B \sem A$ are both tight and  $\chi_A+\chi_B=\chi_{A \sem B}+\chi_{B \sem A}$. 
The next Lemma uses a standard uncrossing by $A \cap B,A \cup B$ or by $A \sem B,B \sem A$.

\begin{lemma} \label{l:diag}
If $A \cap B,A \cup B \notin \RR$ or if $A \sem B, B \sem A \notin \RR$ then $A,B$ are $x$-uncrossable.
\end{lemma}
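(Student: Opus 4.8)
Recall that we may assume $A,B\in\TT$ cross, so $A\cap B,A\sem B,B\sem A$ are nonempty and $A\cup B\ne V$; hence every set produced by uncrossing is a proper nonempty subset of $V$ and is subject to the constraint $x(\de_F(\cdot))\ge f(\cdot)$ of LP~(\ref{LP}). The approach is a standard supermodular uncrossing, run through $A\cap B,A\cup B$ under the first hypothesis and through $A\sem B,B\sem A$ under the second. I would start from the two elementary identities: for any edge set $H$ and any $y\in\mathbb{R}_{\ge 0}^{H}$,
\[
y(\de_H(A))+y(\de_H(B))=y(\de_H(A\cap B))+y(\de_H(A\cup B))+2\,y(\de_H(A\sem B,B\sem A)),
\]
\[
y(\de_H(A))+y(\de_H(B))=y(\de_H(A\sem B))+y(\de_H(B\sem A))+2\,y(\de_H(A\cap B,\bar A\cap\bar B)),
\]
which in fact hold as vector identities in $\mathbb{Z}^{H}$ (each edge of $H$ is counted with the same multiplicity on both sides). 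Taking $H=I$ and $y\equiv 1$ gives $d_I(A)+d_I(B)\ge d_I(A\cap B)+d_I(A\cup B)$ and $d_I(A)+d_I(B)\ge d_I(A\sem B)+d_I(B\sem A)$. Since $f(S)\le k-d_I(S)$ for every proper nonempty $S$, with equality exactly when $S\notin\RR$, I obtain $f(A)+f(B)\le f(A\cap B)+f(A\cup B)$ if $A\cap B,A\cup B\notin\RR$, and $f(A)+f(B)\le f(A\sem B)+f(B\sem A)$ if $A\sem B,B\sem A\notin\RR$. (Note that $\RR$-membership of $A$ or $B$ only decreases the left-hand side, which is harmless, whereas $\RR$-membership of an output set would break the corresponding inequality — this is exactly why the lemma restricts only the output sets.)

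\textbf{Forcing equality.} I treat the first hypothesis; the second is identical with the second identity. Applying the first identity with $H=F$, $y=x$, then LP feasibility of $A\cap B$ and $A\cup B$, then the bound just derived,
\[
f(A)+f(B)=x(\de_F(A))+x(\de_F(B))=x(\de_F(A\cap B))+x(\de_F(A\cup B))+2\,x(\de_F(A\sem B,B\sem A))
\]
\[
\ge f(A\cap B)+f(A\cup B)+2\,x(\de_F(A\sem B,B\sem A))\ge f(A)+f(B),
\]
so every inequality is an equality. Hence $A\cap B$ and $A\cup B$ are tight and $x(\de_F(A\sem B,B\sem A))=0$; since $0<x_e$ for all $e\in F$, the latter forces $\de_F(A\sem B,B\sem A)=\empt$, so the vector form of the first identity reduces to $\chi_A+\chi_B=\chi_{A\cap B}+\chi_{A\cup B}$.

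\textbf{Conclusion.} Because $A\cap B\notin\RR$ we have $f(A\cap B)=k-d_I(A\cap B)\ge 3$ (as $A\cap B\notin\RR$ gives $d_I(A\cap B)\le k-3$), and likewise $f(A\cup B)>0$; together with the tightness just established this gives $A\cap B,A\cup B\in\TT$. Then $\chi_A=\chi_{A\cap B}+\chi_{A\cup B}-\chi_B$ and $\chi_B=\chi_{A\cap B}+\chi_{A\cup B}-\chi_A$ express each of $\chi_A,\chi_B$ through the other and the two sets $A\cap B,A\cup B$ of $\TT\cap\{A\cap B,A\cup B,A\sem B,B\sem A\}$, so $A,B$ are $x$-uncrossable. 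Under the second hypothesis the same computation with $A\sem B,B\sem A$ in place of $A\cap B,A\cup B$ (and $f(A\sem B),f(B\sem A)\ge 3>0$ since these sets lie outside $\RR$) yields the claim.

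\textbf{Main obstacle.} This lemma is essentially bookkeeping; the only points requiring attention are that uncrossing never produces a non-cut (ensured by $A,B$ crossing) and that the output sets carry strictly positive demand (automatic, since a set outside $\RR$ has $f$-value at least $3$). The genuinely hard configuration is exactly the one excluded here — when some set among $A\cap B,A\cup B$ and some set among $A\sem B,B\sem A$ both lie in $\RR$, so that neither standard uncrossing applies; that case is isolated in Lemma~\ref{l:two} and handled there via the non-standard pair $A\cup B,B\sem A$ (see Fig.~\ref{f:cases}(c)).
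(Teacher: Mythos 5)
Your proof is correct and follows essentially the same route as the paper: the standard uncrossing identity for $\de_F$, supermodularity/posimodularity of $k-d_I(\cdot)$ on the sets outside $\RR$, forcing equality throughout, and concluding $\De=0$ together with tightness of the two output sets. Your extra checks (that $\RR$-membership of $A,B$ could only help, and that the output sets have $f\ge 3>0$ so they indeed lie in $\TT$) are minor refinements the paper leaves implicit.
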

\begin{proof}
We prove that if $A \cap B,A \cup B \notin \RR$ then $A,B$ are $x$-uncrossable;
the proof of the case $A \sem B,B \sem A \notin \RR$ is similar and in fact 
can be deduced from the first part by the symmetry of $f$. 
Denoting $\De=x(\de_F(A \sem B,B \sem A))$ we have:
\begin{eqnarray*}
f(A)+f(B) & = &    x(\de_F(A))+x(\de_F(B)) 
                  =       x(\de_F(A \cap B))+x(\de_F(A \cup B))+2\De \\
			   			& \ge & f(A \cap B)+f(A \cup B)+2 \De 
					  	   \ge     f(A)+f(B)+2\De \ .
\end{eqnarray*}
The first equality is since $A,B$ are tight, and 
the second can be verified by counting the contribution of each edge to both sides. 
The first inequality is since $x$ is a feasible LP-solution. 
The second inequality is since none of $A,B,A \cap B,A \cup B$ belongs to $\RR$,
hence $f$ coincides on these sets with the symmetric supermodular function $g(S)=k-d_I(S)$,
that satisfies the supermodular inequality $g(A \cap B)+g(A \cup B) \ge g(A)+g(B)$. 
Consequently, equality holds everywhere, hence $A \cap B,A \cup B$ are both tight.
Moreover, $\De=0$, and this implies $\chi_A+\chi_B=\chi_{A \cap B}+\chi_{A \cup B}$,
concluding the proof.
\end{proof}

From this point it would be convenient to use the following notation, see Fig.~\ref{f:cases}(a). 
The {\bf corner sets} of $A,B$. are 
\[
C_1=A \cap B \ \ \ \ \ C_2=A \sem B \ \ \ \ \  C_3=V \sem (A \cup B) \ \ \ \ \ C_4=B \sem A \ .
\]
We say that $C_i,C_j$ are {\bf adjacent} if $j=i+1$ where the indexes are modulo $4$;
namely, the only non-adjacent corner set pairs are $C_1,C_3$ and $C_2,C_4$.
Note that  $\chi_{C_3}=\chi_{A \cup B}$.

\begin{figure} \centering \includegraphics[scale=0.5]{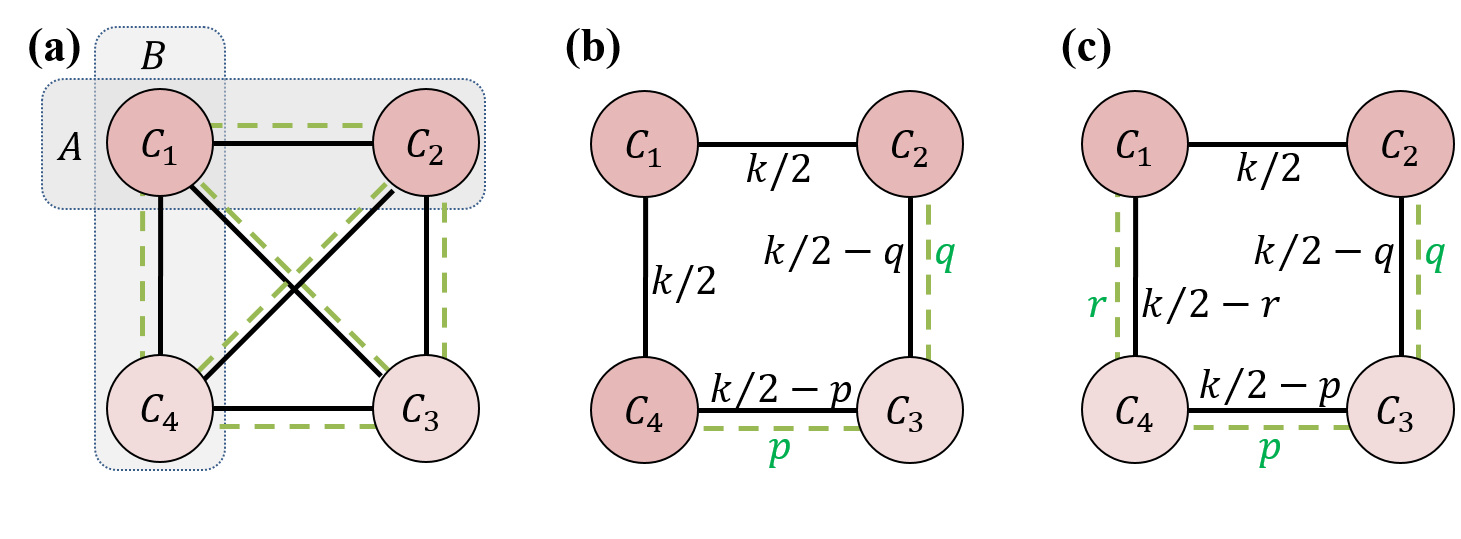}
\caption{
Illustration to the proofs of Lemmas \ref{l:mu}, \ref{l:three}, and \ref{l:two}.
Sets in $\RR$ are shown by darker circles. 
The integral edges (the edges in $I$) are represented  by black lines and 
the fractional edges (the edges in $F$) are shown by green dashed lines;
note that one line may represent many edges.  
The numbers of appropriate color represent the total capacity of fractional or integral edges between sets.  
           }
\label{f:cases} \end{figure}

The next lemma uses for the first time our assumption that $k$ is even.

\begin{lemma} \label{l:mu}
If $k$ is even then $d_I(C_i,C_j) \ge k/2$ for any adjacent $C_i,C_j \in \RR$.
\end{lemma}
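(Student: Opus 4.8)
The plan is to fix two adjacent corner sets, say $C_1 = A\cap B$ and $C_2 = A\sem B$, both in $\RR$, and derive the bound $d_I(C_1,C_2)\ge k/2$ from a parity argument combined with the tightness of $A$. Since $C_1,C_2\in\RR$ we have $d_I(C_1)\ge k-2$ and $d_I(C_2)\ge k-2$; since $A,B\in\TT$ and we are in the crossing case, the sets $A$ and $B$ are tight. The key observation is that the edges in $\de_I(C_1)$ are partitioned according to which of the three regions $C_2, C_3, C_4$ (or back into finer structure) they reach, and similarly for $\de_I(C_2)$; writing everything in terms of the six ``block'' quantities $d_I(C_i,C_j)$ and using that $A=C_1\cup C_2$, $B=C_1\cup C_4$, one gets a small linear system.

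First I would set up notation: let $a_{ij}=d_I(C_i,C_j)$ for the pairs of corner sets, and the analogous $x$-weights $\alpha_{ij}=x(\de_F(C_i,C_j))$. Then $d_I(C_1)=a_{12}+a_{13}+a_{14}$, $d_I(C_2)=a_{12}+a_{23}+a_{24}$, and $d_I(A)=d_I(C_1\cup C_2)=a_{13}+a_{14}+a_{23}+a_{24}$ (the edge set $\de_I(A)$ counts exactly the edges leaving $C_1$ or $C_2$ that do not go between $C_1$ and $C_2$). Because $A\in\RR$ would force $f(A)=k-d_I(A)-2$; but here the relevant fact is just that $f(A)\le k$ in absolute terms and, more usefully, that $f$ is integral so all these counts are integers, and that $C_1,C_2\in\RR$ pins down $d_I(C_1),d_I(C_2)$ to within $2$ of $k$. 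Adding the two inequalities $d_I(C_1)+d_I(C_2)\ge 2k-4$ and subtracting twice $a_{12}$, I would compare against $d_I(A)+d_I(B)$ or against the bound $d_I(A)\le$ something coming from tightness: since $x(\de_F(A))=f(A)$ and $x_e<1$, we have $d_F(A)> f(A)$, but more to the point $d_I(A) = k - f(A)$ when $A\notin\RR$, i.e. $d_I(A)\le k$ always and $d_I(A)=k-f(A)\le k-1$ if $A\notin\RR$ since $f(A)\ge 1$ on $\TT$. The parity input — that $k$ is even — then forces the slack to be even, pushing $a_{12}$ up to at least $k/2$.

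Concretely, I expect the argument to run: from $d_I(C_1)\ge k-2$, $d_I(C_2)\ge k-2$, and (from $A,B\in\TT$, so $A,B\notin\RR$, so $d_I(A),d_I(B)\le k-1$), we get
\[
2(k-2)\le d_I(C_1)+d_I(C_2) = 2a_{12} + (a_{13}+a_{14}+a_{23}+a_{24}) = 2a_{12}+d_I(A)\le 2a_{12}+k-1,
\]
giving $a_{12}\ge (k-3)/2$, hence $a_{12}\ge\lceil (k-3)/2\rceil = (k-2)/2 = k/2 - 1$ since $k$ is even. To close the final gap from $k/2-1$ to $k/2$, I would invoke a parity argument: examine the tight constraint for $A$, namely $x(\de_F(A)) = f(A) = k - d_I(A)$, together with $d_I(A)\ge ?$ — actually the cleanest route is to note that $d_I(C_1) \equiv d_I(A) \pmod 2$ is not automatic, so instead one should use the stronger structural fact available to the author (likely: an edge-disjoint-paths / parity argument saying that within a cut accumulating exactly $k-2$ integral edges in a context where every $x_e<1$, the integral edges between two adjacent near-tight pieces must split evenly). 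The main obstacle, and the step I'd flag as delicate, is exactly this last parity push from $k/2-1$ to $k/2$: it is where the evenness of $k$ must be used in an essential (non-throwaway) way, and it presumably relies on the precise interplay between the tightness of $A$ (hence $B$), the definition of $\RR$ as the threshold $k-2$, and the fact that $C_1,C_2\in\RR$ while $A,B\notin\RR$ — a small case analysis on the residual slacks modulo $2$ rather than a one-line inequality.
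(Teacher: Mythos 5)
Your decomposition $d_I(C_1)+d_I(C_2)=2\,d_I(C_1,C_2)+d_I(A)$ is exactly the one the paper uses, but your proof has a genuine gap in the bound on $d_I(A)$, and the ``parity push'' you flag as delicate is not actually needed — you are simply using the wrong upper bound. You bound $d_I(A)\le k-1$ via $f(A)\ge 1$ (tightness of $A$ plus integrality of $f$), which only yields $d_I(C_1,C_2)\ge (k-3)/2$, i.e.\ $k/2-1$ for $k$ even, and then you gesture at an unspecified parity/structural argument to gain the last unit. No such extra argument exists or is required. The point you missed is that membership in $\RR$ is \emph{defined} by the degree threshold, $\RR=\{S: d_I(S)\ge k-2\}$, so $A\notin\RR$ (which holds because $A\in\TT$ forces $f(A)>0$, while $A\in\RR$ would give $f(A)=k-d_I(A)-2\le 0$) immediately yields the stronger bound $d_I(A)\le k-3$. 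Plugging this in gives
\[
k-3 \;\ge\; d_I(A) \;=\; d_I(C_1)+d_I(C_2)-2\,d_I(C_1,C_2) \;\ge\; 2(k-2)-2\,d_I(C_1,C_2),
\]
hence $2\,d_I(C_1,C_2)\ge k-1$, and since $d_I(C_1,C_2)$ is an integer and $k$ is even, $d_I(C_1,C_2)\ge k/2$. The evenness of $k$ enters only in this final integer rounding step (for $k$ odd one gets only $(k-1)/2$, which is exactly why the paper runs the algorithm with $k'=k-1$ in the odd case); there is no hidden modular or edge-disjoint-paths argument. So your overall approach is the right one, but as written the proof does not close, and the speculative final paragraph should be replaced by the one-line observation that $A\notin\RR$ means $d_I(A)\le k-3$ by definition of $\RR$.
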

\begin{proof}
W.l.o.g. assume that $C_1,C_2 \in \RR$.
Then $d_I(C_1) \ge k-2$ and $d_I(C_2) \ge k-2$.
Since $A \notin \RR$, $d_I(A) \le k-3$. Thus we get (see Fig.~\ref{f:cases}(a)):
\[
k-3 \ge d_I(A)=d_I(C_1)+d_I(C_2)-2d_I(C_1,C_2) \ge 2(k-2)-2d_I(C_1,C_2) \ .
\]
Consequently, $2d_I(C_1,C_2) \ge k-1$, implying $d_I(C_1,C_2) \ge k/2$, since $k$ is even.
\end{proof}

\begin{remark*}
Note that before Lemma~\ref{l:mu} we didn't use the assumption that $k$ is even.
For $k$ even Lemma~\ref{l:mu} gives the bound $d_I(C_i,C_j) \ge \f{k}{2}$,
but for $k$ odd only the bound $d_I(C_i,C_j) \ge \f{k-1}{2}$,
which is less than $k/2$. 
The next two lemmas, apart of using standard uncrossing methods, 
rely only on the fact that there are at least $k/2$ integral between any adjacent sets $C_i,C_j \in \RR$.
If this was true for $k$ odd, then we could also prove Lemma~\ref{l:uncross} for $k$ odd. 
\end{remark*}

\medskip

For disjoint sets $S,T$ let 
$\psi(S,T)=d_I(S,T)+x(\de_F(S,T))$, 
and let $\psi(S)=\psi(S,\bar{S})$ be the ``coverage'' of $S$ by both $I$ and $F$.
Note that $\psi(S)<k$ implies that $S \in \RR$ and that $d_F(S) \ge 2$ for all $S \in \TT$ 
(since $x(\de_F(S)=f(S) \ge 1$ for all $S \in \TT$). 

\begin{lemma} \label{l:three}
At most $2$ corner sets belong to $\RR$.
\end{lemma}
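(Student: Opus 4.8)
The plan is to rule out the two bad configurations: three corner sets in $\RR$, and all four in $\RR$. I would argue by contradiction, so suppose at least three corner sets are in $\RR$. Among any three corner sets at least two are adjacent and, since the corner sets come in the cyclic order $C_1,C_2,C_3,C_4$, if three of them are in $\RR$ then in fact there are (at least) two \emph{disjoint} adjacent pairs among those in $\RR$ — e.g. if $C_1,C_2,C_3 \in \RR$ then $\{C_1,C_2\}$ and $\{C_2,C_3\}$ are adjacent, and if all four are in $\RR$ we get the two disjoint adjacent pairs $\{C_1,C_2\}$, $\{C_3,C_4\}$. The workhorse is Lemma~\ref{l:mu}: each adjacent pair $C_i,C_j\in\RR$ contributes $d_I(C_i,C_j)\ge k/2$ integral edges running between those two corner sets.

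The key step is then a counting argument on one of the four original tight sets. Say $C_1,C_2,C_3\in\RR$ (the case of all four is only easier). The set $A=C_1\cup C_2$ is not in $\RR$, so $d_I(A)\le k-3$; but $A$ is tight and $f(A)=k-d_I(A)\ge 3$, so $x(\de_F(A))\ge 3$, hence $d_F(A)\ge 3$. Now $\de_I(A)$ and $\de_F(A)$ consist exactly of the edges between $\{C_1,C_2\}$ and $\{C_3,C_4\}$, so
\[
\psi(A)=d_I(A)+x(\de_F(A)) = \psi(C_1,C_3)+\psi(C_1,C_4)+\psi(C_2,C_3)+\psi(C_2,C_4).
\]
By Lemma~\ref{l:mu}, $d_I(C_2,C_3)\ge k/2$, and by the same counting on $B=C_2\cup C_3$ (also not in $\RR$, so $d_I(B)\le k-3$ and $d_F(B)\ge 3$), we get a symmetric bound on the edges between $\{C_2,C_3\}$ and $\{C_1,C_4\}$. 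Combining the inequality $d_I(A)\le k-3$ with the lower bound $d_I(C_2,C_3)\ge k/2$ and the non-negativity of the other terms, together with the fact that the $k/2$ integral edges between $C_2$ and $C_3$ are counted inside neither $\de_I(A)$ nor $\de_I(B)$ but their presence forces $d_I(C_2)\ge k/2$ and $d_I(C_3)\ge k/2$ to be ``used up'' on that single edge class, I would derive a contradiction with $d_I(A)\le k-3$ and $d_I(B)\le k-3$ simultaneously. Concretely: $C_2\in\RR$ so $d_I(C_2)\ge k-2$; of these at most $d_I(A)\le k-3$ leave $A$ and the rest stay inside $A$, i.e. go to $C_1$, so $d_I(C_1,C_2)\ge (k-2)-(k-3)=1$ is too weak by itself — instead I use both $d_I(C_1,C_2)\ge k/2$ (Lemma~\ref{l:mu}) and $d_I(C_2,C_3)\ge k/2$, which together give $d_I(C_2)\ge d_I(C_1,C_2)+d_I(C_2,C_3)\ge k$, contradicting... no: $d_I(C_2)$ can be large. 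The actual contradiction must come from $A$: $d_I(C_1,C_2)\ge k/2$ are edges \emph{inside} $A$, but $d_I(C_2,C_3)\ge k/2$ are edges \emph{leaving} $A$ (from $C_2$ to $C_3$), and likewise $d_I(C_2,C_3)$ together with $d_I(C_3,C_4)$ are edges leaving/staying for $B$; summing appropriately gives $d_I(A)\ge d_I(C_2,C_3)\ge k/2$ and also, from $C_3\in\RR$, that $d_I(A)=d_I(C_1,C_3)+d_I(C_1,C_4)+d_I(C_2,C_3)+d_I(C_2,C_4)\ge d_I(C_2,C_3)+d_I(C_1,C_3)$, and a parallel estimate from the other adjacent pair forces $d_I(A)$ or $d_I(B)$ above $k-3$.

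The main obstacle I anticipate is bookkeeping: making sure each integral edge between corner sets is charged to the right cut-boundary and that the two adjacent pairs guaranteed by Lemma~\ref{l:mu} are genuinely disjoint edge classes, so their $k/2$ bounds add rather than overlap. Once the disjointness is nailed down, the arithmetic $k/2+k/2 = k > k-3$ on the boundary of one of the non-$\RR$ sets $A$ or $B$ closes the case; the all-four-in-$\RR$ case then follows a fortiori, since it only adds more forced edges. I would also double-check the degenerate possibility that one of the other corner sets (the one not forced into $\RR$) is empty, which can only happen if $A\cup B=V$ fails or a corner vanishes — but we assumed at the start of the section that $A,B$ cross, so all four corner sets are nonempty and the counting is valid.
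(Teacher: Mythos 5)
Your handling of the all-four case is fine (and arguably more direct than the paper's): $\{C_1,C_2\}$ and $\{C_3,C_4\}$ are genuinely disjoint adjacent pairs, both of whose edge classes cross $B$, so $d_I(B)\ge d_I(C_1,C_2)+d_I(C_3,C_4)\ge k$, contradicting $B\notin\RR$. The gap is in the exactly-three case. Your claim that three corner sets in $\RR$ yield two \emph{disjoint} adjacent pairs is false: with $C_1,C_2,C_3\in\RR$ the only adjacent pairs inside $\RR$ are $\{C_1,C_2\}$ and $\{C_2,C_3\}$, which share $C_2$. The edge class $\de_I(C_1,C_2)$ lies inside $A$ and crosses $B$, while $\de_I(C_2,C_3)$ lies inside $\bar B$ and crosses $A$; so each of the two non-$\RR$ cuts receives only \emph{one} of the two $k/2$ lower bounds, giving $d_I(A)\ge k/2$ and $d_I(B)\ge k/2$, which is perfectly consistent with $d_I(A),d_I(B)\le k-3$ once $k\ge 6$. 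Your intended punchline ``$k/2+k/2>k-3$ on the boundary of $A$ or $B$'' therefore never fires, and you in fact notice this yourself when you observe that $d_I(C_2)\ge k$ is not a contradiction. The remainder of your sketch is explicitly conjectural and does not supply the missing step.

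What is missing is a global accounting rather than a single-cut count. The two adjacent pairs meeting at $C_2$ give $\psi(C_2)\ge d_I(C_1,C_2)+d_I(C_2,C_3)\ge k$, and $C_4\notin\RR$ gives $\psi(C_4)\ge k$; on the other hand $\psi(C_2)+\psi(C_4)=\psi(A)+\psi(B)-2\psi(C_1,C_3)\le 2k$. Hence equality holds throughout: $\psi(C_2)=\psi(C_4)=k$, $\psi(C_1,C_3)=\psi(C_2,C_4)=0$, and $d_I(C_1,C_2)=d_I(C_2,C_3)=k/2$ exactly, after which $\psi(C_1,C_4)=\psi(C_3,C_4)=k/2$. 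The contradiction then comes from an integrality argument on, say, $q=x(\de_F(C_1,C_4))=k/2-d_I(C_1,C_4)\in\mathbb{Z}$: $C_1\in\RR$ forces $q\le 2$ while $A\notin\RR$ forces $q\ge 3$. This equality-propagation step (the paper's proof, transported to your labelling) is the essential ingredient your proposal lacks. Two minor slips besides: $C_2\cup C_3$ is $\bar B$, not $B$ (harmless since $\de(\bar B)=\de(B)$), and from $f(A)\ge 3$ with all $x_e<1$ you actually get $d_F(A)\ge 4$, not $3$ (unused anyway).
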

\begin{proof}
If all corner sets are in $\RR$, 
then since $\psi(A)=\psi(B)=k$ and by Lemma~\ref{l:mu}, 
we get that 
$\psi(C_i,C_j)=d_I(C_i,C_j)=k/2$ if $C_i,C_j$ are adjacent 
and $\psi(C_i,C_j)=0$ otherwise. 
But then $d_F(A)=d_F(B)=0$, which is a contradiction.

Suppose now that exactly $3$ corner sets belong to $\RR$,
say w.l.o.g $C_1,C_2,C_4 \in \RR$ and $C_3 \notin \RR$.
Note that:
\begin{itemize}
\item
$\psi(C_3) \ge k$, since $C_3 \notin \RR$.
\item 
$d_I(C_1,C_2) \ge k/2$ and $d_I(C_1,C_4) \ge k/2$, by Lemma~\ref{l:mu}; hence $\psi(C_1) \ge d_I(C_1) \ge k$.
\item
$\psi(C_1)+\psi(C_3)=\psi(A)+\psi(B)-2\psi(C_2,C_4) =2k-2\psi(C_2,C_4)$.
\end{itemize}
From this we get that $\psi(C_1)=\psi(C_3)=k$ and $\psi(C_2,C_4)=0$.
Moreover, $d_I(C_1,C_2)=d_I(C_1,C_4) = k/2$ must hold, 
and thus $\psi(C_1,C_3)=0$. 
Consequently, the situation is as depicted in Fig.~\ref{f:cases}(b), where $p,q$ are integers. 
We must have $q \le 2$ since $C_2 \in \RR$ and we must hace $q \ge 3$ since $A \notin \RR$,
which is a contradiction.
\end{proof}

\begin{lemma} \label{l:two}
If exactly $2$ corner sets belong to $\RR$ then $A,B$ are $x$-uncrossable. 
\end{lemma}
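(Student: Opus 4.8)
The plan is to handle the case where exactly two corner sets lie in $\RR$ by splitting according to whether the two sets are adjacent or non-adjacent, and within the adjacent case, whether they are ``consecutive around $A$'' (or $B$) or ``opposite across the membership of $A,B$ in $\RR$''. First I would observe that if the two corner sets in $\RR$ are non-adjacent, say $C_1,C_3 \in \RR$ and $C_2,C_4 \notin \RR$, then $A\sem B=C_2$ and $B\sem A=C_4$ are both outside $\RR$, so Lemma~\ref{l:diag} applies directly and $A,B$ are $x$-uncrossable. Symmetrically, if $C_2,C_4\in\RR$ and $C_1,C_3\notin\RR$, then $A\cap B$ and $A\cup B$ (recall $\chi_{C_3}=\chi_{A\cup B}$) are outside $\RR$ and again Lemma~\ref{l:diag} gives the claim. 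So the real content is the adjacent case.

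So suppose without loss of generality $C_1,C_2\in\RR$ and $C_3,C_4\notin\RR$; then $A\notin\RR$ is witnessed by its two corner pieces behaving ``oppositely'' and the two standard uncrossings both fail ($A\cup B=C_3$ is fine but $A\cap B=C_1\in\RR$; and $B\sem A=C_4$ is fine but $A\sem B=C_2\in\RR$). This is the case advertised in the introduction as requiring the ``non-standard'' pair $A\cup B, B\sem A$. The plan is to use the counting machinery with $\psi$: from $\psi(C_3)\ge k$ and $\psi(C_4)\ge k$ (both outside $\RR$) together with $\psi(A)=\psi(B)=k$ and Lemma~\ref{l:mu} giving $d_I(C_1,C_2)\ge k/2$, I would derive via the identity $\psi(C_3)+\psi(C_4)=\psi(A)+\psi(B)-2\psi(C_1\text{-to-}C_3\text{ type terms})$ — more precisely tracking the cross terms $\psi(C_1,C_3),\psi(C_2,C_4),\psi(C_1,C_4),\psi(C_2,C_3)$ — that several of these cross-coverages must vanish and the relevant inequalities are all tight. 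Concretely I expect to conclude $\psi(C_3)=\psi(C_4)=k$, that the $C_1$–$C_2$ edges are entirely integral of count exactly $k/2$, and that $C_2$ sends no edges to $C_3\cup C_4$ while $C_1$ sends no edges to $C_3$ (or some symmetric configuration), arriving at the picture in Fig.~\ref{f:cases}(c).

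Once the edge pattern is pinned down, the goal is to show $\chi_{A\cup B}$ and $\chi_{B\sem A}$ are both tight (i.e., $A\cup B=C_3\in\TT$ and $B\sem A=C_4\in\TT$) and that $\chi_A+\chi_B$ is a linear combination of $\chi_{A\cup B}$ and $\chi_{B\sem A}$ — which is exactly the $x$-uncrossability requirement with the non-standard pair. Tightness should follow because the chain of inequalities in the $\psi$-count collapses to equalities, forcing the feasibility inequalities for $C_3$ and $C_4$ to hold with equality on the $F$-part; and $f(C_3)=k-d_I(C_3)>0$, $f(C_4)=k-d_I(C_4)>0$ since those sets are not in $\RR$, so they genuinely lie in $\TT$. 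For the linear dependence, with the established pattern the fractional edges leaving $A$ are exactly those leaving $C_3$ on one side plus those between $C_1$ and $C_4$, and similarly for $B$, so that $\delta_F(A)\,\triangle\,\delta_F(B)$ decomposes cleanly as $\delta_F(A\cup B)\,\triangle\,\delta_F(B\sem A)$, up to fractional edges between $C_1$ and $C_2$ which we have shown are absent; I would verify $\chi_A+\chi_B=\chi_{A\cup B}+\chi_{B\sem A}$ by an edge-by-edge contribution count exactly as in Lemma~\ref{l:diag}.

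The main obstacle I anticipate is the bookkeeping in the adjacent subcase: because the two standard uncrossings are unavailable, one must extract from the $\psi$-identities not just that certain sums are tight but the \emph{exact} distribution of integral versus fractional edges among all six pairs of corner sets, and there may be a further subcase depending on which pair of adjacent corners (one ``inside'' both $A$ and $B$, the others split) is in $\RR$ — e.g.\ $C_1,C_2$ versus $C_1,C_4$ give mirror-image configurations, and one should check the argument is symmetric under swapping the roles of $A$ and $B$ and of $\cap$ and $\cup$. I would also double-check the edge case $k=2$ separately if the counting degenerates (then $\RR=\{S:d_I(S)\ge 0\}$ is everything, but $f(S)>0$ for $S\in\TT$ would already be contradictory, so this case cannot arise), but the substantive work is confirming that Fig.~\ref{f:cases}(c) is forced and then reading off the linear relation from it.
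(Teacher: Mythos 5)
Your proposal follows essentially the same route as the paper: the non-adjacent case is disposed of via Lemma~\ref{l:diag}, and in the adjacent case ($C_1,C_2\in\RR$, $C_3,C_4\notin\RR$) the $\psi$-counting together with Lemma~\ref{l:mu} forces the configuration of Fig.~\ref{f:cases}(c), yields $C_3,C_4\in\TT$, and gives a linear relation among $\chi_A,\chi_B,\chi_{A\cup B},\chi_{B\sem A}$. One correction: the identity you propose to verify, $\chi_A+\chi_B=\chi_{A\cup B}+\chi_{B\sem A}$, is false --- in the forced configuration the two sides differ by the indicator of $\de_F(C_3,C_4)$, which equals $\de_F(B)$ and is nonempty since $B\in\TT$; the correct relation is $\chi_A+2\chi_B=\chi_{A\cup B}+\chi_{B\sem A}$. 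This does not affect the conclusion, because $x$-uncrossability only requires \emph{some} linear combination, and the edge-by-edge count you promise to carry out would produce the right coefficients.
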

\begin{proof}
The case when two non-adjacent corner sets are not in $\RR$ was proved in Lemma~\ref{l:diag}.
So assume w.l.o.g. that $C_1,C_2 \in \RR$ and $C_3,C_4 \notin \RR$.
We will show that then the following holds, see Fig.~\ref{f:cases}(c):
\begin{enumerate}[(i)]
\item
$d_F(C_1,C_2)=0$ and $\psi(C_1,C_3)=\psi(C_2,C_4)=0$. 
\item
$C_3,C_4 \in \TT$.
\end{enumerate}
Note that (i) implies $\chi_A+2\chi_B=\chi_{C_3}+\chi_{C_4}=\chi_{A \cup B}+\chi_{B \sem A}$ (see Fig.~\ref{f:cases}(c)),
and thus by (ii) $A,B$ are $x$-uncrossable. 

Since $\psi(B)=k$ and $\psi(C_1,C_2) \ge d_I(C_1,C_2) \ge k/2$ (by Lemma~\ref{l:mu})
we have:
\[
\psi(C_3,C_4) \le \psi(B)-\psi(C_1,C_2)  \le k-d_I(C_1,C_2) \le k/2 \ .
\]
Since $\psi(A) =k$ and $\psi(C_3),\psi(C_4) \ge k$ (since $C_3,C_4 \notin \RR$) we have
\[
2\psi(C_3,C_4)=\psi(C_3)+\psi(C_4)-\psi(A) \ge k \ .
\]
Consequently, $\psi(C_3,C_4)=k/2$, and this implies that in the above inequalities equality holds everywhere.
In particular we get:
\begin{itemize}
\item
$\psi(C_3,C_4) = \psi(B)-\psi(C_1,C_2)$ and hence  $\psi(C_1,C_3)=\psi(C_2,C_4)=0$.  
\item
$\psi(C_1,C_2)=d_I(C_1,C_2)=k/2$ hence $d_F(C_1,C_2)=0$.
\end{itemize}

Now we claim that $\psi(C_1,C_4)=\psi(C_2,C_3)=k/2$, see fig.~\ref{f:cases}(c).
This easily follows from the fact that 
$\psi(C_1,C_4)+\psi(C_2,C_3) = \psi(A)=k$ and the following equalities:
\[
\psi(C_1,C_4)+k/2 = \psi(C_1,C_4)+\psi(C_3,C_4)= \psi(C_4) \ge k    
\]
\[
\psi(C_2,C_3)+k/2 = \psi(C_2,C_3)+\psi(C_3,C_4)= \psi(C_3) \ge k    
\]
Consequently, we get that $C_3,C_4$ are tight.
Moreover, since $\de_F(A)$ and $\de_F(B)$ are both nonempty, so are $\de_F(C_3)$ and $\de_F(C_4)$.
This implies that $C_3,C_4 \in \TT$, concluding the proof. 
\end{proof}

The final uncrossing case is illustrated in Fig.~\ref{f:cases}(c),
where $p,q,r$ are positive integers such that:
$q+r \ge 3$ (since $A \notin \RR$), 
$p+q,p+r \ge 3$ (since $C_3,C_4 \notin \RR$),
$p \ge 3$ (since $B \notin \RR$), 
and $q,r \le 2$ (since $C_1,C_2 \in \RR$).
For example, we may have $p=3$ and $q=r=2$. 

This concludes the proof of Lemma~\ref{l:main} and thus also the first part of Theorem~\ref{t:main}. 

\section{A (3/2,{\em k} -- 1) bicriteria approximation} \label{s:main'}

To achieve $(3/2,k-1)$ bicriteria approximation we use the function $f$ defined on proper subset of $V$ by
\begin{equation} \label{f'}
f(S) = \left \{ \begin{array}{ll}
k-d_I(S)-1 \ \ & \mbox{if } S \in \RR \\
k-d_I(S)        & \mbox{otherwise}
\end{array} \right .
\end{equation}
and $f(\empt)=f(V)=0$.
Here the set $\RR$ is defined by: 
\begin{equation} \label{e:RR'}
\RR=\{S:d_I(S) \ge k-1\} \ .
\end{equation}

The main difference between Algorithm~\ref{alg:main} and the following Algorithm~\ref{alg:main'} are:
\begin{enumerate}
\item
We use a different function $f$ -- the one defined in (\ref{f'}). 
\item
We round to $1$ edges $e$ with $x_e \ge 2/3$. 
\item
The algorithm is for both $k$ odd and even.
\end{enumerate}

\medskip 

\begin{algorithm}[H]
\caption{A bicriteria $(3/2,k-1)$-approximation} \label{alg:main'}
{\bf initialization}: $I \gets \empt$, $F \gets E$ \\
\While{$F \ne \empt$}
{
compute an extreme point solution $x$ to LP  (\ref{LP}) with $f$ in (\ref{f'}) and $\RR$ in (\ref{e:RR'}) \\
remove from $F$ every edge $e$ with $x_e=0$ \\
move from $F$ to $I$ every edge $e$ with $x_e \ge 3/2$ 
}
\Return{$I$}
\end{algorithm}

\medskip 

Note that here we combine iterative relaxation with iterative rounding.
Since in each iteration the cut constraints are only relaxed, 
and only edges $e$ with $x_e \ge 2/3$ are rounded,
the solution cost is at most $3/2$ times the initial LP-value.
The connectivity guarantee $k-1$ is also straightforward.
We will prove the following counterpart of Lemma~\ref{l:lami}.

\begin{lemma} \label{l:lami'}
Let $x$ be an extreme point of the polytope 
$P=\{x \in [0,1]^F:x(\de_F(S)) \ge f(S)\}$ where 
$f$ is defined in (\ref{f'}), 
$\RR$ is defined in (\ref{e:RR'}), and 
$F \ne \empt$. 
Suppose that $0<x_e<1$ for all $e \in F$.
Then there exists an $x$-defining family $\LL$ that is laminar.
\end{lemma}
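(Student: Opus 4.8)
The plan is to mirror the structure used for Lemma~\ref{l:lami} in the even case: reduce the claim to an $x$-uncrossability statement and then verify $x$-uncrossability by a case analysis on how many corner sets of a crossing pair $A,B\in\TT$ lie in $\RR$. Concretely, first I would restate the reduction lemmas from Section~\ref{s:main}: it suffices to show that any $A,B\in\TT$ are $x$-uncrossable, since Jain's lemma then yields a laminar $x$-defining family. As in Lemma~\ref{l:uncross}, the case where $A,B$ overlap but do not cross (i.e.\ $A\cup B=V$) is trivial because $\chi_{A\sem B}=\chi_B$ and $\chi_{B\sem A}=\chi_A$. So assume $A,B$ cross, let $C_1=A\cap B$, $C_2=A\sem B$, $C_3=V\sem(A\cup B)$, $C_4=B\sem A$ be the corner sets, and case on $|\{i:C_i\in\RR\}|$.

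The key change from Section~\ref{s:main} is the arithmetic driving the corner-set counting: now $S\in\RR$ means $d_I(S)\ge k-1$ and $f$ subtracts only $1$ (not $2$), so $\psi(S)<k$ still forces $S\in\RR$, and $f(S)\ge 1$ for $S\in\TT$ still gives $d_F(S)\ge 2$. I would reprove the analogue of Lemma~\ref{l:mu}: if $C_i,C_j\in\RR$ are adjacent (say $C_1,C_2$), then from $k-2\ge d_I(A)=d_I(C_1)+d_I(C_2)-2d_I(C_1,C_2)\ge 2(k-1)-2d_I(C_1,C_2)$ we get $2d_I(C_1,C_2)\ge k$, hence $d_I(C_1,C_2)\ge k/2$ when $k$ is even, and $d_I(C_1,C_2)\ge\lceil k/2\rceil$ only after rounding up — but importantly the clean bound $d_I(C_1,C_2)\ge k/2$ now holds for \emph{both} parities, which is exactly why Algorithm~\ref{alg:main'} works for odd $k$ too. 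With this bound in hand, the diagonal uncrossing (Lemma~\ref{l:diag}) goes through verbatim when $C_1,C_3\notin\RR$ or $C_2,C_4\notin\RR$, since on sets outside $\RR$ the function $f$ agrees with the supermodular $g(S)=k-d_I(S)$. Then I would rerun the arguments of Lemmas~\ref{l:three} and~\ref{l:two}: ``all four corners in $\RR$'' is impossible because it forces $d_F(A)=d_F(B)=0$; ``exactly three in $\RR$'' is ruled out by the same $\psi$-balancing plus the contradiction $q\le 1$ versus $q\ge 2$ coming from the new thresholds; and ``exactly two adjacent corners in $\RR$'' (say $C_1,C_2$) yields, via the identical $\psi(C_3,C_4)=k/2$ computation, that $d_F(C_1,C_2)=0$, $\psi(C_1,C_3)=\psi(C_2,C_4)=0$, and $C_3,C_4\in\TT$, so $\chi_A+2\chi_B=\chi_{A\cup B}+\chi_{B\sem A}$ gives $x$-uncrossability with the non-standard pair $A\cup B, B\sem A$.

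The step I expect to be the main obstacle — and the only place the proof genuinely differs — is checking that the integer-inequality bookkeeping in the three-corner and two-corner cases still produces a contradiction (resp.\ the desired equalities) with the shifted constants $k-1$ and $-1$ in place of $k-2$ and $-2$. In particular, one must confirm that the ``$\psi(C_i,C_j)$ is a nonnegative integer'' constraints, combined with $d_I(C_i,C_j)\ge k/2$ for adjacent $\RR$-corners and the tightness equalities $\psi(A)=\psi(B)=k$, still pin down $\psi(C_1,C_4)=\psi(C_2,C_3)=k/2$ and leave no room for a configuration avoiding the contradiction; since the only place parity of $k$ entered before was Lemma~\ref{l:mu}, and that now gives $d_I(C_i,C_j)\ge k/2$ unconditionally, the remaining inequalities are purely about $\psi$-values and carry over. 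I would therefore present the proof as: (1) the uncrossability reduction (cite the lemmas of Section~\ref{s:main}); (2) the non-crossing case; (3) the corner-pair counting bound (the new Lemma~\ref{l:mu} analogue, valid for all $k$); (4) diagonal uncrossing; (5) elimination of the four- and three-corner cases; (6) the two-corner case producing $\chi_A+2\chi_B=\chi_{A\cup B}+\chi_{B\sem A}$ with $C_3,C_4\in\TT$; and conclude $x$-uncrossability, hence a laminar $x$-defining family by Jain's lemma.
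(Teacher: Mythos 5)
Your proposal is correct and follows essentially the same route as the paper: reduce to $x$-uncrossability, note that the diagonal uncrossing is unchanged, reprove the Lemma~\ref{l:mu} analogue with the shifted thresholds to get $d_I(C_i,C_j)\ge\lceil k/2\rceil\ge k/2$ for both parities, and observe that Lemmas~\ref{l:three} and~\ref{l:two} then carry over. Your explicit re-check of the three-corner bookkeeping (the contradiction becoming $q\le 1$ versus $q\ge 2$) is a point the paper passes over silently, but it confirms rather than alters the argument.
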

\begin{proof}
As before, it is sufficient to prove that any $A,B \in \TT$ that cross are $x$-uncrossable,
where  $\TT=\{S: x(\de_F(S))=f(S)>0\}$. Fix $A,B \in \TT$ that cross. 
By a totally identical proof to that of Lemma~\ref{l:diag} 
(standard uncrossing by $A \cap B,A \cup B$ or by $A \sem B,B \sem A$)
we have:

\begin{claim*}
If $A \cap B,A \cup B \notin \RR$ or if $A \sem B, B \sem A \notin \RR$ then $A,B$ are $x$-uncrossable. \hfill $\Box$
\end{claim*}

The following claim (a counterpart of Lemma~\ref{l:mu}) holds both for $k$ odd an even.

\begin{claim*} 
$d_I(C_i,C_j) \ge \lceil k/2 \rceil$ for any adjacent $C_i,C_j \in \RR$.
\end{claim*}
{\em Proof.}
W.l.o.g. assume that $C_1,C_2 \in \RR$.
Then $d_I(C_1) \ge k-1$ and $d_I(C_2) \ge k-1$.
Since $A \notin \RR$, $d_I(A) \le k-2$. Thus we get 
\[
k-2 \ge d_I(A)=d_I(C_1)+d_I(C_2)-2d_I(C_1,C_2) \ge 2(k-1)-2d_I(C_1,C_2) \ .
\]
Consequently, $2d_I(C_1,C_2) \ge k$, implying $d_I(C_1,C_2) \ge \lceil k/2 \rceil$.
\hfill $\Box$

\medskip

As was mentioned in the remark after Lemma~\ref{l:mu}, 
the proofs of Lemma \ref{l:three} and \ref{l:two} 
rely only on the fact that there are at least $k/2$ edges between any adjacent $C_i,C_j \in \RR$.
Since this condition holds in our case for both $k$ odd and even, 
Lemmas \ref{l:three} and \ref{l:two} also hold, concluding the proof. 
\end{proof}

We will now show that the algorithm terminates, namely, that there always exists an edge $e$ with 
$x_e=0$ or $x_e \ge 2/3$;
other polynomial time implementation details are identical to those in Lemmas~\ref{l:m}~and~\ref{l:impl}.
To show that such $e$ always exists, it is sufficient to prove the following counterpart of Lemma~\ref{l:main}. 

\begin{lemma} \label{l:main'}
Let $x$ be an extreme point as in Lemma~\ref{l:lami'}.
Then there is $e \in F$ such that $x_e \ge 2/3$. 
\end{lemma}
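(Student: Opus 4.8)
The plan is to mimic the argument that established Lemma~\ref{l:main} from Lemma~\ref{l:lami}, but with the rounding threshold $2/3$ in place of $1$. By Lemma~\ref{l:lami'} there is a laminar $x$-defining family $\LL$, and by Lemma~\ref{l:J3} (applied with the integral set function $f$ of~(\ref{f'})) there is an inclusion-minimal $C \in \LL$ with $d_F(C) \le 3$. So the whole proof reduces to examining the three cases $d_F(C) \in \{1,2,3\}$ and showing that in each of them some fractional edge in $\de_F(C)$ has value $\ge 2/3$, or else deriving a contradiction with $f(C)>0$ (i.e. with $C \in \TT$).

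First I would record what $f(C)>0$ means here: since $f(C) \ge 1$ and $f$ is integral, we have $x(\de_F(C)) \ge 1$, and moreover if $C \in \RR$ then $f(C) = k - d_I(C) - 1 \ge 1$, so $d_I(C) \le k-2$ and (by definition of $\RR$) $d_I(C) = k-1$ — a contradiction. Hence $C \notin \RR$, so $d_I(C) \le k-2$ and $f(C) = k - d_I(C) \ge 2$; thus $x(\de_F(C)) \ge 2$. Now if $d_F(C) = 1$, one fractional edge carries value $\ge 2 > 1$, impossible (or rather it is $\le 1$, contradiction, so this case cannot occur). If $d_F(C) = 2$, the two fractional edges together carry value $\ge 2$, and since each is $\le 1$ they must both equal $1$, contradicting $0 < x_e < 1$; so this case cannot occur either. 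If $d_F(C) = 3$, the three fractional edges carry total value $x(\de_F(C)) = f(C) \ge 2$; since there are exactly three of them, the largest has value $\ge 2/3$. That edge $e \in F$ satisfies $x_e \ge 2/3$, as required.

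The only subtlety — which I would flag as the step needing the most care — is that $f(C)$ could a priori be as large as $k-2$ rather than exactly $2$, but this only helps: the bound $x(\de_F(C)) = f(C) \ge 2$ together with $d_F(C) \le 3$ is all that is needed, and in fact when $d_F(C) \le 2$ we get $f(C) \le d_F(C) \le 2$ forcing $d_I(C) \ge k-2$, i.e. $C \in \RR$ (if $d_I(C)=k-1$) or $f(C)$ small; the cleanest route is simply the dichotomy above: $C \notin \RR$ forces $f(C) \ge 2$, and then the pigeonhole on $d_F(C) \le 3$ closes every case. One should also double-check that Lemma~\ref{l:J3}, as stated, applies verbatim: it is stated for ``any integral set function $f$'' with $0 < x_e < 1$ for all $e$ and a laminar $x$-defining family, which is exactly our situation via Lemma~\ref{l:lami'}, so no modification is needed. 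Hence such an edge $e$ with $x_e \ge 2/3$ always exists, completing the proof and, with the cost and connectivity observations already made, the $(3/2, k-1)$ part of Theorem~\ref{t:main}.
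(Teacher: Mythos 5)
Your proof is correct and follows essentially the same route as the paper: invoke Lemma~\ref{l:lami'} and Lemma~\ref{l:J3} to get a minimal tight set $C$ with $d_F(C)\le 3$, rule out $C\in\RR$ using $f(C)>0$, deduce $x(\de_F(C))=f(C)\ge 2$, and apply pigeonhole. The paper pins down $d_I(C)=k-2$ exactly (using $x(\de_F(C))<3$) where you only use the inequality $f(C)\ge 2$, but this is an immaterial difference.
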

\begin{proof}
Let $\LL$ be a laminar $x$-defining family as in Lemma~\ref{l:lami'}. 
By Lemma~\ref{l:J3},
there is $C \in \LL$ such that $d_F(C) \le 3$.
Since $C$ is tight, $f$ is integral, and $0<x_e<1$ for all $e \in F$, we must have 
$d_I(C)=k-1$ or $d_I(C)=k-2$. 
The former case is not possible since $d_I(C)=k-1$ implies $C \in \RR$, 
while in the latter case $x_e \ge2/3$ for some $e \in \de_F(C)$, as claimed.
 \end{proof}

This concludes the proof of the second part of Theorem~\ref{t:main}.

\section{Concluding remarks} \label{s:conc}

We showed that {\kECSS} admits bicriteria approximation ratios 
$(1,k-2)$ for $k$ is even and $\left(1-\f{1}{k},k-3\right)$ for $k$ odd.
We also obtained a $(3/2,k-1)$ bicriteria approximation, which is the first algorithm
that achieves $k-1$ connectivity by cost approximation below $2$.
However currently no hardness result indicates that 
approximation $(1,k-1)$ cannot be achieved. 
On the other hand, such a result is known 
only for simple graphs with unit costs -- it can be deduced from the result of Gabow and Gallagher \cite{GG},
and their algorithm is highly involved. 

As far as we can see, our results extend to the more general ``set-connectivity'' problem
$
\min\{c^T \cdot x:x(\de_E(S)) \ge g(S), x \in \{0,1\}^E\}
$
with an arbitrary symmetric integral set function $g$ that can be ''uncrossed'' 
both by intersection-union $A \cap B, A \cup B$ and by the differences $A \sem B,B \sem A$.
One such class are set functions obtained by zeroing a symmetric 
supermodular function $h$ on non-proper sets, 
namely $g(S)=h(S)$ if $S \notin \{\empt,V\}$ and $g(\empt)=g(V)=0$ 
(for {\kECSS}, $h(S)=k$ for all $S$). 
This algorithm will achieve a bicriteria approximation $(1,g(S)-2)$ if $g$ is even-valued.
Unfortunately, we failed to find additional problems, 
besides {\kECSS}, that comply with this setting. 


\end{document}